\documentclass[11pt]{article}
\usepackage[utf8]{inputenc}

\usepackage{graphicx}
\usepackage{xcolor}
\usepackage{amsmath,amsthm,amssymb}
\usepackage[margin=1in]{geometry}
\usepackage[pdftex,colorlinks=true,linkcolor=blue,citecolor=blue,urlcolor=blue]{hyperref}
\usepackage{comment}
\usepackage[normalem]{ulem}
\usepackage{cite}
\usepackage{tcolorbox}

\newcommand{\E}{\mathbb{E}}

\def\ba{\begin{array}}
\def\ea{\end{array}}

\def\0{{\bf 0}}

\def\b{{\bf b}}

\def\x{{\bf x}}

\def\u{{\bf u}}
\def\E{{\mathbb E}}
\def\D{{\mathcal D}}

\def\F{{\mathcal F}}
\def\T{{\mathcal T}}
\def\Tr{{\rm Tr}}

\newcommand{\ket}[1]{| #1 \rangle}
\newcommand{\bra}[1]{\langle #1|}

\renewcommand{\Re}{\operatorname{Re}}

\newcommand{\be}{\begin{equation}}
\newcommand{\ee}{\end{equation}}
\newcommand{\bea}{\begin{eqnarray}}
\newcommand{\eea}{\end{eqnarray}}
\newcommand{\bes}{\begin{equation*}}
\newcommand{\ees}{\end{equation*}}
\newcommand{\beas}{\begin{eqnarray*}}
\newcommand{\eeas}{\end{eqnarray*}}

\makeatletter
\newtheorem*{rep@theorem}{\rep@title}
\newcommand{\newreptheorem}[2]{%
\newenvironment{rep#1}[1]{%
 \def\rep@title{#2 \ref{##1} (restated)}%
 \begin{rep@theorem}}%
 {\end{rep@theorem}}}
\makeatother

\allowdisplaybreaks

\newtheorem{thm}{Theorem}
\newtheorem*{thm*}{Theorem}

\newtheorem{lem}[thm]{Lemma}
\newtheorem*{lem*}{Lemma}

\newtheorem{prop}[thm]{Proposition}

\newtheorem{fact}[thm]{Fact}

\newreptheorem{thm}{Theorem}
\newreptheorem{lem}{Lemma}
\newreptheorem{cor}{Corollary}

\usepackage{pgfplots}

\usepackage{authblk}

\title{Worst-case Harrow--Hassidim--Lloyd algorithm with average-case-correct quantum Fourier transform}

\author{Changpeng Shao\thanks{changpeng.shao@amss.ac.cn}}

\affil{\small{SKLMS, Academy of Mathematics and Systems Science, Chinese Academy of Sciences, Beijing, China}}

\date{\today}

\begin{document}

\maketitle

\begin{abstract}

In [\href{https://quantum-journal.org/papers/q-2022-12-07-872/}{Quantum 6, 872, 2022}], Linden and de Wolf proposed a lightweight protocol for verifying average-case correctness of the quantum Fourier transform (QFT). They showed that good average-case QFT performance is sufficient for good worst-case performance in several quantum information-processing tasks. In this work, we study whether such average-case guarantees are also sufficient when the QFT is used coherently inside the Harrow--Hassidim--Lloyd algorithm. We show that the original average-case condition is not quite strong enough for this purpose, due to possible relative phase errors between different eigenspaces. To address this, we introduce a strengthened Linden--de Wolf-type verification condition that controls the relevant coherences, and prove that it guarantees worst-case correctness of the HHL algorithm in several natural settings.

\end{abstract}

\section{Introduction}

Quantum verification refers to the process of certifying the correctness and reliability of quantum devices and quantum computations. Given the probabilistic nature of quantum systems, verification is a central issue in the practical use of quantum computers. Much effort has been devoted to quantum verification in different settings; see, for example, \cite{linden2021lightweight,eisert2020quantum,mahadev2018classical,flammia2011direct,asadi2022quantum,gheorghiu2019verification,da2011practical,TothGuhne2005PRL,TothGuhne2005PRA,TothEtAl2010PRL}.

The quantum Fourier transform (QFT) is a basic primitive in quantum computing, with applications including Shor's algorithm~\cite{shor1997polynomial} and quantum phase estimation~\cite{kitaev1995quantum}. Recently, Linden and de Wolf~\cite{linden2021average} proposed an efficient lightweight protocol for verifying average-case correctness of the QFT. They showed that good average-case QFT performance is sufficient for good worst-case performance in several key quantum tasks, including phase estimation, period finding, and amplitude estimation. This is useful because worst-case verification can be exponentially hard, whereas average-case verification can be performed efficiently.
Very recently, Kahanamoku-Meyer, Blue, Bergamaschi, Gidney, and Chuang introduced the notion of optimistic quantum circuits and constructed a logarithmic-depth in-place QFT that is correct on most inputs~\cite{kahanamoku2025log}. Their work is complementary to the average-case verification viewpoint of Linden and de Wolf: it provides explicit efficient QFT circuits with good average-case behavior and studies when such circuits can be used in worst-case settings.

In this work, we study whether an average-case-correct QFT can also be used coherently inside the Harrow--Hassidim--Lloyd (HHL) algorithm~\cite{harrow2009quantum} for solving linear systems. The HHL algorithm is an important application of phase estimation and has played a central role in the development of quantum algorithms, especially in quantum machine learning; see, for example, \cite{biamonte2017quantum,lloyd2014quantum,rebentrost2014quantum}. At first sight, the result of Linden and de Wolf for phase estimation might seem sufficient for HHL. However, their result concerns the task of outputting eigenvalue estimates. In contrast, HHL must output a quantum state whose amplitudes depend coherently on the eigenvalues and eigenvectors of the input matrix. For this purpose, average-case correctness in the sense of Linden and de Wolf is not enough: an implementation of the inverse QFT may pass their test while introducing uncontrolled relative phases between different eigenvalue components.

To illustrate this point, let $C=DF_N^{-1}$, where $F_N^{-1}$ is the inverse QFT and $D$ is an unknown diagonal unitary. Then $C$ is perfectly correct on the Fourier basis in the sense that  $\E_k[\F(C \ket{\hat{k}} , F_N^{-1} \ket{\hat{k}} ) ] \geq 1-\eta$,
where $\ket{\hat{k}}=F_N\ket{k}$ and  $\F$ is the fidelity. Thus $C$ passes the Linden--de Wolf test with probability one. However, if the input state is
$\alpha_1\ket{\widehat{k_1}}\ket{u_1}+\alpha_2\ket{\widehat{k_2}}\ket{u_2}$,
then applying $C\otimes I$ produces
$\alpha_1 e^{i\theta_1}\ket{k_1}\ket{u_1}
+\alpha_2 e^{i\theta_2}\ket{k_2}\ket{u_2}$
for some unknown phases $\theta_1,\theta_2$. If the goal is only to measure the first register and output $k_1$ or $k_2$, these phases are irrelevant. But if the goal is to output the coherent state
$\alpha_1\ket{k_1}\ket{u_1}+\alpha_2\ket{k_2}\ket{u_2}$
up to a prescribed error, then this guarantee is not sufficient unless $\theta_1\approx\theta_2$. This phase ambiguity is problematic for the HHL algorithm.

To resolve this issue, we introduce a strengthened Linden--de Wolf-type condition that controls the relevant coherences. Let $C$ be a quantum channel. Linden and de Wolf assumed that
\be
\label{intro:eq1}
\E_k \left[ \F(C(\ket{\hat{k}}\bra{\hat{k}}), \ket{k}\bra{k} ) \right]
=
\E_{k} \left[ \bra{k} \, C(\ket{\hat{k}}\bra{\hat{k}}) \, \ket{k} \right]
\geq 1- \eta,
\ee
that is, $C$ is close to $F_N^{-1}$ on the Fourier basis on average. In this work, we consider the stronger condition
\be
\label{intro:eq2}
\E_{k,l} \left[\bra{k} \, C(\ket{\hat{k}}\bra{\hat{l}}) \, \ket{l} \right] \geq 1- \eta.
\ee
In the special case where $C$ is unitary, this condition implies that $|\E_k \bra{k} C\ket{\hat{k}}|^2 \geq 1-\eta,$
or equivalently, the normalized trace of $CF_N$ is close to one. Thus $CF_N$ is close to the identity up to a global phase.

We also show that this strengthened condition can still be verified in a lightweight way. More precisely, $C$ satisfies the above coherence condition if and only if it is close to $F_N^{-1}$ on average both on the Fourier basis and on the computational basis. Combining this with the protocol of Linden and de Wolf gives a lightweight test for the strengthened condition. Similar results hold for quantum channels $P$ that approximate $F_N$ on average, under the analogous condition
\be
\label{intro:eq3}
\E_{k,l} \left[\bra{\widehat{-k}} \, P(\ket{k}\bra{l}) \, \ket{\widehat{-l}} \right] \geq 1- \eta.
\ee

Finally, assuming that $C$ and $P$ satisfy these strengthened average-case conditions, we prove that replacing $F_N^{-1}$ by $C$ and $F_N$ by $P$ in the HHL algorithm produces a state close to the desired output state, with high average fidelity. The rigorous statements are given in Theorems~\ref{main theorem} and~\ref{main theorem:general case}. We also consider special cases in which $C$ and $P$ are unitary channels; in these settings the results take a simpler form, as shown in Theorems~\ref{thm1:case with C inverse}, \ref{thm2:case with C inverse}, \ref{thm1:case with C, P and more}, and \ref{thm2:case with C, P and more}.

\section{Preliminaries}
\label{section:Preliminaries}

\subsection{Some facts about fidelity and trace distance}

The {\em fidelity} between two mixed quantum states $\rho$ and $\sigma$ is defined as
\be
\F(\rho,\sigma) := \left( {\rm Tr}\sqrt{\sqrt{\sigma} \rho \sqrt{\sigma}} \, \right)^2.
\ee
For any two states $\rho$ and $\sigma$, we have $0\leq \F(\rho,\sigma) \leq 1$ and $\F(\rho,\sigma) = 1$ if and only if $\rho=\sigma$.
In particular, if $\sigma=\ket{\psi} \bra{\psi}$ is a pure state, then $\F(\rho,\ket{\psi} \bra{\psi})
=\langle \psi| \rho | \psi\rangle$. Let $\mathcal{E}$ be a quantum channel, then $\F(\rho, \sigma) \leq \F(\mathcal{E}(\rho), \mathcal{E}(\sigma))$. The equality holds when $\mathcal{E}$ is a unitary channel. 

A closely related notion is {\em trace distance}.
It is defined as follows
\be
\T(\rho, \sigma) := \frac{1}{2} \|\rho - \sigma\|_1 = \frac{1}{2} {\rm Tr} \sqrt{(\rho-\sigma)^\dag (\rho-\sigma)}.
\ee
The trace distance is a metric on the space of density matrices, i.e., it is non-negative, symmetric, and satisfies the triangle inequality. 
Moreover,
let $\mathcal{E}$ be a quantum channel, then $ \T(\mathcal{E}(\rho), \mathcal{E}(\sigma))  \leq \T(\rho, \sigma)$, and equality holds when $\mathcal{E}$ is a unitary channel. 
The trace distance can be computed using the singular values of $\rho-\sigma$. More precisely, it equals half of the sum of all singular values of $\rho-\sigma$.

The fidelity is related to the trace distance by the following inequalities:
\bea
&& 1 - \sqrt{\F(\rho,\sigma)} \leq \T(\rho, \sigma) \leq \sqrt{1-\F(\rho,\sigma)} , 
\label{ineq1} \\
&& 1 - \F(\rho, \ket{\psi} \bra{\psi} ) \leq \T(\rho, \ket{\psi} \bra{\psi}).
\label{ineq2} 
\eea
Fidelity does not satisfy the triangle inequality; however, according to (\ref{ineq1}), (\ref{ineq2}), we have the following inequality that will be used frequently in this work.
For more about properties of fidelity and trace distance, we refer to the book \cite{watrous2018theory}.

\begin{prop}
\label{prop:inequality for fidelity}
For any two density operators $\rho,\sigma$ and any quantum state $\ket{\psi}$, we have
\bes
\F(\rho,\ket{\psi}\bra{\psi}) \geq
1 - \sqrt{1-\F(\rho, \sigma)} - \sqrt{1-\F(\sigma, \ket{\psi} \bra{\psi})}.
\ees
\end{prop}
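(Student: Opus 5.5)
The plan is to pass through the trace distance, which obeys the triangle inequality, and then convert back to fidelity using (\ref{ineq1}) and (\ref{ineq2}). Write $\psi:=\ket{\psi}\bra{\psi}$ for brevity. Everything takes place at the level of trace distance; fidelity only appears at the two ends of the argument.

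First, by (\ref{ineq2}) applied to $\rho$ and the pure state $\psi$,
\bes
1-\F(\rho,\psi) \leq \T(\rho,\psi).
\ees
Next, since $\T$ is a metric on density operators, the triangle inequality gives
\bes
\T(\rho,\psi)\leq \T(\rho,\sigma)+\T(\sigma,\psi).
\ees
Finally, the upper bound in (\ref{ineq1}) gives $\T(\rho,\sigma)\leq\sqrt{1-\F(\rho,\sigma)}$. Applied to the pair $(\sigma,\psi)$, the same bound gives $\T(\sigma,\psi)\leq \sqrt{1-\F(\sigma,\psi)}$.

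Chaining these three displays yields
\bes
1-\F(\rho,\psi)\leq \sqrt{1-\F(\rho,\sigma)}+\sqrt{1-\F(\sigma,\psi)}.
\ees
Rearranging gives exactly the claimed inequality.

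There is essentially no obstacle here. The only points to check are that (\ref{ineq1}) holds for arbitrary mixed states, so that it covers both pairs $(\rho,\sigma)$ and $(\sigma,\psi)$, and that $\F$ is symmetric, so the order of arguments is irrelevant. The one conceptual choice is to use the sharper pure-state bound (\ref{ineq2}) for the first step, rather than the lower bound in (\ref{ineq1}). The lower bound in (\ref{ineq1}) would only control $1-\sqrt{\F}$ and not $1-\F$.
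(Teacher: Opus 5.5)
Your proposal is correct and follows exactly the paper's argument. Apply (\ref{ineq2}) to the pair $(\rho,\ket{\psi}\bra{\psi})$, then the triangle inequality for $\T$, then the upper bound in (\ref{ineq1}) to each of the pairs $(\rho,\sigma)$ and $(\sigma,\ket{\psi}\bra{\psi})$.
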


\begin{proof}
From (\ref{ineq2}), we have $\F(\rho,\ket{\psi}\bra{\psi}) \geq 1 - \T(\rho, \ket{\psi} \bra{\psi}).$ By the triangle inequality of trace distance, we have that $\F(\rho,\ket{\psi}\bra{\psi}) \geq
1 - \T(\rho, \sigma) - \T(\sigma, \ket{\psi} \bra{\psi}).$
Finally, by (\ref{ineq1}), we obtain the claimed lower bound.
\end{proof}

\begin{fact}[von Neumann's trace inequality \cite{mirsky1975trace}]
\label{fact:von Neumann's trace inequality}
For any $n \times n$ operators $A,B$ with singular values 
$\alpha_1 \geq \alpha_2 \geq \cdots \geq \alpha_n$ and 
$\beta_1 \geq \beta_2 \geq \cdots \geq \beta_n$
respectively, we have ${\rm Tr}(AB) \leq \sum_{i=1}^n \alpha_i \beta_i.$
Particularly, ${\rm Tr}(AB) \leq \|A\|_2 \|B\|_1$, where $\|\cdot\|_2$ is the operator norm and $\|\cdot\|_1$ is the trace norm.
\end{fact}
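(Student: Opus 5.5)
The inequality ${\rm Tr}(AB) \leq \sum_i \alpha_i\beta_i$ (read as a bound on $\Re\,{\rm Tr}(AB)$, or equivalently on $|{\rm Tr}(AB)|$, since the trace may be complex) is a classical result, and we cite it as in \cite{mirsky1975trace}. We only sketch how the particular consequence ${\rm Tr}(AB)\le \|A\|_2\|B\|_1$ follows, and recall the standard route to the general statement.

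For the particular case we take a singular value decomposition $B=\sum_i \beta_i \ket{u_i}\bra{v_i}$, with $\{\ket{u_i}\}$ and $\{\ket{v_i}\}$ orthonormal families. Then
\[
|{\rm Tr}(AB)| = \Big|\sum_i \beta_i \bra{v_i}A\ket{u_i}\Big| \le \sum_i \beta_i\,|\bra{v_i}A\ket{u_i}| \le \|A\|_2\sum_i\beta_i = \|A\|_2\|B\|_1 .
\]
Here we used that $|\bra{v}A\ket{u}|\le\|A\|_2$ for unit vectors $\ket{u},\ket{v}$. This elementary argument is all that is needed later in the paper.

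For the general inequality, write $A=U\Sigma_A V^\dagger$ and $B=W\Sigma_B X^\dagger$. Then ${\rm Tr}(AB)=\sum_{i,j}\alpha_i\beta_j M_{ij}N_{ji}$, where $M=V^\dagger W$ and $N=X^\dagger U$ are unitary. By Cauchy--Schwarz, the matrix $(|M_{ij}N_{ji}|)$ is doubly substochastic: each row sum is at most $\big(\sum_j|M_{ij}|^2\big)^{1/2}\big(\sum_j|N_{ji}|^2\big)^{1/2}=1$, and the column sums are bounded in the same way. A doubly substochastic matrix is dominated entrywise by a doubly stochastic one. By Birkhoff's theorem the resulting linear functional of $(\alpha_i\beta_j)$ is maximized at a permutation matrix, and the rearrangement inequality shows that among permutations the identity maximizes $\sum_i\alpha_i\beta_{\pi(i)}$, since both sequences are sorted decreasingly.

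The only real obstacle is the substochastic-to-stochastic extension step. It is standard, since any doubly substochastic matrix can be completed by increasing entries. We do not reproduce it and rely instead on the cited reference.
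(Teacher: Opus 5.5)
The paper does not prove this fact. It quotes it from Mirsky, and the special case ${\rm Tr}(AB)\le\|A\|_2\|B\|_1$ is implicitly obtained from the general bound via $\sum_i\alpha_i\beta_i\le\alpha_1\sum_i\beta_i$.

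Your proposal is correct. It differs from the paper in that you prove the special case directly from the singular value decomposition of $B$, without going through the general inequality. That is a real gain here, for two reasons:
\begin{itemize}
\item The special case is the only part the paper uses, in the proof of Theorem~\ref{thm2 for perfect case}, to bound $|{\rm Tr}(M(\rho_{7,l}-\ket{\phi_{7,l}}\bra{\phi_{7,l}}))|$.
\item That application needs the absolute-value form you prove. The literal form ${\rm Tr}(AB)\le\cdots$ is not meaningful for a complex trace.
\end{itemize}

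Your sketch of the general inequality is a standard correct proof: you form the doubly substochastic matrix $(|M_{ij}N_{ji}|)$, then apply Birkhoff's theorem and rearrangement. The two points you pass over are harmless.
\begin{itemize}
\item The weights $\alpha_i\beta_j$ are nonnegative, so enlarging entries can only increase the sum.
\item The completion step is a short greedy argument. If some row sum is below $1$, the total mass is below $n$, so some column sum is also below $1$. Raise their common entry by the smaller deficit, which saturates a row or column, and repeat.
\end{itemize}

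If you prefer to avoid both Birkhoff and the completion step, Abel summation suffices. Set $a_k=\alpha_k-\alpha_{k+1}\ge 0$ and $b_l=\beta_l-\beta_{l+1}\ge 0$, with $\alpha_{n+1}=\beta_{n+1}=0$, and let $D_{ij}=|M_{ij}N_{ji}|$. Then
\[
\sum_{i,j}\alpha_i\beta_jD_{ij}=\sum_{k,l}a_kb_l\sum_{i\le k,\,j\le l}D_{ij}\le\sum_{k,l}a_kb_l\min(k,l)=\sum_i\alpha_i\beta_i .
\]
This uses only the row and column sum bounds on $D_{ij}$.
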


\begin{fact}[Jensen's inequality]
\label{fact:Jensen's inequality}
Let $X$ be a random variable and $\varphi$ be a convex function, then
$\varphi(\E[X]) \leq \E[\varphi(X)]$.
\end{fact}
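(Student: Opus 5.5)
The statement immediately above is Jensen's inequality (Fact~\ref{fact:Jensen's inequality}). It is a standard fact, and the natural proof goes through supporting lines of a convex function. I will treat the case where $\varphi:\mathbb{R}\to\mathbb{R}$ (or defined on an interval containing the range of $X$) is convex, and $X$ and $\varphi(X)$ are integrable. In this paper it is only applied to finite uniform averages such as $\E_k$ or $\E_{k,l}$, so one may even assume $X$ takes finitely many values.

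First, set $\mu=\E[X]$. Since $X$ lies in the domain interval of $\varphi$, so does $\mu$. The key step is to produce a supporting line at $\mu$: a constant $c$ with
\[
\varphi(x)\geq \varphi(\mu)+c\,(x-\mu)\quad\text{for all } x \text{ in the domain.}
\]
To get $c$, use convexity to show that the difference quotient $s\mapsto (\varphi(s)-\varphi(\mu))/(s-\mu)$ is nondecreasing in $s$ on each side of $\mu$. This is the three-chord lemma, derived directly from the definition of convexity. It follows that the left derivative at $\mu$ is at most the right derivative, and any $c$ between them works.

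If $\mu$ is an endpoint of the domain, then $X=\mu$ almost surely and the inequality is trivial. Handling this boundary case is the only mildly delicate point.

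Next, substitute $x=X$ and take expectations. Monotonicity and linearity of expectation give
\[
\E[\varphi(X)]\geq \varphi(\mu)+c\,(\E[X]-\mu)=\varphi(\E[X]),
\]
which is the claim.

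For the finite uniform case one could instead induct on the number of points, starting from the two-point definition of convexity $\varphi(tx+(1-t)y)\leq t\varphi(x)+(1-t)\varphi(y)$ and peeling off one point at a time with weights renormalized. I expect no real obstacle; the only step needing care is the existence of the supporting line, which the three-chord monotonicity argument settles.
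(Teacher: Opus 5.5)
The paper does not prove this fact; it quotes it as a standard tool, so there is no argument of the paper's to compare with. Your supporting-line proof is the textbook one and is correct for real-valued $X$. You rightly check that $\E[X]$ stays in the domain interval and you dispose of the endpoint case.

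One wording fix. Monotonicity of $s\mapsto(\varphi(s)-\varphi(\mu))/(s-\mu)$ separately on each side of $\mu$ does not by itself give $\varphi'_-(\mu)\le\varphi'_+(\mu)$. For example, $-|x|+x^2$ at $0$ has slopes monotone on each side, yet $\varphi'_-(0)=1>-1=\varphi'_+(0)$. What the three-chord lemma actually provides, and what you need, is that every slope from the left is at most every slope from the right. Equivalently, the slope function is monotone on the whole domain minus $\{\mu\}$.

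The more substantive caveat is scope: the paper applies this fact to complex-valued averages.
\begin{itemize}
\item In the proof of Theorem~\ref{main theorem} it passes from $\sum_i\E_l\bigl[|Y_{i,l}|^2\bigr]$ to $\sum_i\bigl|\E_l[Y_{i,l}]\bigr|^2$, where $Y_{i,l}=\sum_k|\beta_k|^2\langle p_k+l|A_i|\widehat{p_k+l}\rangle\in\mathbb{C}$.
\item The same step appears, implicitly, in Lemma~\ref{lem1:general case} and Theorem~\ref{main theorem:general case}.
\end{itemize}
Here $\varphi(z)=|z|^2$ is convex on $\mathbb{C}\cong\mathbb{R}^2$, and your one-dimensional argument does not literally cover that case. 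The cheapest repair is to apply your real case with $\varphi(x)=x^2$ to $\Re Y$ and $\operatorname{Im} Y$ separately and add. Alternatively, just note that $\E|Y|^2-|\E Y|^2=\E|Y-\E Y|^2\ge 0$. A general multivariate Jensen inequality would need a subgradient (a supporting hyperplane) obtained from a separation argument, not from the three-chord lemma.

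The paper's other use is $\E_l\sqrt{1-F_l}\le\sqrt{1-\E_l F_l}$ for fidelities $F_l\in[0,1]$, as in Theorems~\ref{main theorem} and~\ref{thm2 for perfect case}. Your real case covers this, applied to the convex function $-\sqrt{1-x}$ on $[0,1]$. There your endpoint case is genuinely needed, because this function has no finite supporting line at $x=1$.
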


\subsection{Sets of quantum channels that are close to the QFT on average}

For any integer $N$, we denote by $[N]=\{0,1,\ldots,N-1\}$ and $\omega_N=e^{2\pi i/N}$. The quantum Fourier transform (QFT) is defined as $F_N = N^{-1/2}\sum_{j,k\in[N]}\omega_N^{jk} \ket{j} \bra{k}$. In this work, we always assume that $N=2^n$ for some integer $n>0$.
For any integer $k\in[N]$, we use $\ket{\hat{k}}$ to denote the Fourier basis state, i.e.,
\[
\ket{\hat{k}} 
= \frac{1}{\sqrt{N}} \sum_{l\in [N]} \omega_N^{kl} \, \ket{l}
= \bigotimes_{l=1}^n \Big(\ket{0} + e^{2\pi i k/2^l} \ket{1} \Big).
\]
Let
\be
\label{unitary:Fourier basis}
U_k = \bigotimes_{l=1}^n 
\begin{pmatrix}
1 & 0 \\
0 & e^{2\pi i k/2^l} 
\end{pmatrix},
\ee
then $\ket{\hat{k}} = U_k \ket{+}^n$. It is easy to check that $U_k^{-1} = U_{-k}$.

In the following, we will define three different sets collecting all quantum channels that are close to the inverse of QFT in different senses. First, inspired by \cite{linden2021average},
for any $\eta\in[0,1]$, we define
\bea
S_1(\eta) &=& \Big\{C \text{ is a quantum channel}: 
\E_k\Big[\F\Big(C (\ket{\hat{k}}\bra{\hat{k}}) , \, \ket{k}\bra{k}\Big) \Big] \geq 1-\eta \Big\} \\
&=& \Big\{C \text{ is a quantum channel}: \E_k\Big[\bra{k} \, C\circ F_N (\ket{k}\bra{k}) \, \ket{k}\Big] \geq 1-\eta \Big\},
\eea
In the above, $C \circ F_{N} $ refers to the composition of two channels, i.e., $C \circ F_{N} (\rho) = C (F_N \rho F_N^\dag)$ for any density operator $\rho$.
The set $S_1(\eta)$ collects all channels $C$ that are supposed to be close to $F_{N}^{-1}$ on average on the Fourier basis. The following is a lightweight protocol for verifying this.

\begin{quote}
{\bf Testing Algorithm 1 (TA1).} \cite[Theorem 1]{linden2021average} {\em Choose $k\in[N]$ uniformly at random and prepare $\ket{\hat{k}}$. Run $C$ on $\ket{\hat{k}}$ and measure the resulting state in the computational basis. Output 1 if the outcome is $k$, and output 0 otherwise.}
\end{quote}
It was proved in \cite[Theorem 1]{linden2021average} that to test if $C\in S_1(\eta)$ or not it suffices to run the above protocol $O(\eta^{-2}\log(1/\delta))$ times, where $\delta$ is the failure probability. 
As explained in the introduction, $C\in S_1(\eta)$ is not strong enough for us to run the HHL algorithm.
To overcome this, we define a new set as follows:
\be
S_3(\eta) = \{C \text{ is a quantum channel}: \E_{k,l} \bra{k} \, C(\ket{\hat{k}}\bra{\hat{l}}) \, \ket{l} \geq 1- \eta\}.
\ee

To obtain some intuition about $S_3(\eta)$, we consider the case that $C \in S_3(\eta)$ is a unitary channel.
We can check that the norm of the normalized trace of $CF_N$ is close to 1, i.e., $|\E_k\langle k| C | \hat{k} \rangle|^2 \geq 1-\eta$. In this case, we will have $C \approx e^{i\theta} F_N^{-1}$ for some $\theta$, which only introduces a global phase up to certain error.
We shall prove that $S_3(\eta)$ is the right set for us to run the HHL algorithm. Before that, the first question we need to answer is how to verify whether $C \in S_3(\eta)$ in a lightweight way. For this, we introduce another set containing all $C$ that are close to $F_N^{-1}$ on average on the computational basis:
\be
S_2(\eta) = \Big\{C \text{ is a quantum channel}: 
\E_k\Big[\F\Big(C (\ket{k}\bra{k}) , \, \ket{\widehat{-k}}\bra{\widehat{-k}}\Big) \Big] \geq 1-\eta \Big\} .
\ee
In the above and also in the following, by $-k$ we always mean it is modulo $N$, i.e., $\ket{-k} = \ket{N-k}$. 

The following result states that to verify $C\in S_3(\eta)$ it suffices to propose a lightweight protocol to verify whether $C \in S_2(\eta)$. The proof of this theorem is given at the end of Section \ref{section:Some facts about S3}.

\begin{thm}
\label{thm:testing S3}
Suppose $C \in S_1(\eta_1) \cap S_2(\eta_2)$, then $C \in S_3(\eta_1+\eta_2)$.
Conversely, if $C\in S_3(\eta)$, then $C \in S_1(\eta) \cap S_2(\eta)$.
\end{thm}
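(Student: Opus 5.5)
I will work with a Kraus decomposition of $C$ and rewrite all three average-case quantities as Hilbert--Schmidt norms of projections of the Kraus operators of $C\circ F_N$. Write $C(\rho)=\sum_j A_j\rho A_j^\dagger$ and set $B_j=A_jF_N$, so that $\sum_j B_j^\dagger B_j=I$ and hence $\sum_j\|B_j\|_F^2=N$.

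\textbf{Step 1: rewriting the three quantities.} Using $\ket{\hat k}=F_N\ket{k}$, the $S_3$ quantity becomes
\[
Q:=\E_{k,l}\bra{k}C(\ket{\hat k}\bra{\hat l})\ket{l}=\frac{1}{N^2}\sum_j\Big|\sum_k\bra{k}B_j\ket{k}\Big|^2=\frac1{N^2}\sum_j|\Tr B_j|^2 .
\]
The $S_1$ quantity becomes $\frac1N\sum_j\sum_k|\bra{k}B_j\ket{k}|^2$. For $S_2$, note that $\ket{\widehat{-k}}=F_N^\dagger\ket{k}$, so $\bra{\widehat{-k}}A_j\ket{k}=\bra{k}F_NA_j\ket{k}=\bra{k}F_NB_jF_N^\dagger\ket{k}$. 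The $S_2$ quantity is therefore $\frac1N\sum_j\sum_k|\bra{\hat k}B_j\ket{\hat k}|^2$ (up to relabelling $k$ by $-k$, which does not matter because we sum over all $k$).

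Now introduce three orthogonal projections on operator space with the Hilbert--Schmidt inner product:
\begin{itemize}
\item $\Pi_1$, the projection onto matrices diagonal in the computational basis;
\item $\Pi_2$, the projection onto matrices diagonal in the Fourier basis (the circulant matrices);
\item $P$, the projection onto $\mathbb{C}I$, so $P(B)=\frac{\Tr B}{N}I$.
\end{itemize}
With these, $Q=\frac1N\sum_j\|PB_j\|_F^2$, the $S_1$ quantity equals $\frac1N\sum_j\|\Pi_1B_j\|_F^2$, and the $S_2$ quantity equals $\frac1N\sum_j\|\Pi_2B_j\|_F^2$.

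\textbf{Step 2: the converse direction.} This is immediate. By Cauchy--Schwarz, $|\Tr B_j|^2/N^2\le\frac1N\sum_k|\bra{k}B_j\ket{k}|^2$, which gives $Q\le$ the $S_1$ quantity. The same inequality applied in the Fourier basis, using $\Tr B_j=\sum_k\bra{\hat k}B_j\ket{\hat k}$, gives $Q\le$ the $S_2$ quantity. Equivalently, $P\le\Pi_1$ and $P\le\Pi_2$. Hence $C\in S_3(\eta)$ implies $C\in S_1(\eta)\cap S_2(\eta)$.

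\textbf{Step 3: the forward direction.} The key structural claim is that $\Pi_1$ and $\Pi_2$ commute and $\Pi_1\Pi_2=\Pi_2\Pi_1=P$. I expect verifying this to be the main point. The plan is to check it through inner products. A circulant matrix $E$ has constant diagonal $\Tr(E)/N$. So for any operator $X$, the diagonal of $\Pi_2X$ is constant and equal to $\Tr(X)/N$, which gives $\Pi_1\Pi_2X=P X$. Symmetrically, the Fourier-diagonal of a computationally diagonal matrix $D$ is constant, since $\bra{\hat k}D\ket{\hat k}=\Tr(D)/N$ for every $k$. This gives $\Pi_2\Pi_1X=PX$.

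Given the claim, $\Pi_1+\Pi_2-P$ is an orthogonal projection (onto the sum of the two ranges). Therefore, for each $j$,
\[
\|\Pi_1B_j\|_F^2+\|\Pi_2B_j\|_F^2-\|PB_j\|_F^2=\|(\Pi_1+\Pi_2-P)B_j\|_F^2\le\|B_j\|_F^2 .
\]
Summing over $j$ and dividing by $N$ gives $(1-\eta_1)+(1-\eta_2)-Q\le 1$, that is, $Q\ge1-\eta_1-\eta_2$, so $C\in S_3(\eta_1+\eta_2)$.
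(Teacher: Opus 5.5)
Your proof is correct in both directions. The three averages are indeed $\frac1N\sum_j\|\Pi_1B_j\|_F^2$, $\frac1N\sum_j\|\Pi_2B_j\|_F^2$ and $\frac1N\sum_j\|PB_j\|_F^2$; the relabelling via $\ket{\widehat{-k}}=F_N^\dagger\ket{k}$ is fine. The identity $\Pi_1\Pi_2=\Pi_2\Pi_1=P$ follows from $|\langle m|\hat k\rangle|^2=1/N$ exactly as you argue.

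The paper proves the same forward inequality, but in coordinates. Using a Stinespring dilation, it writes the $S_2$ average as $N^{-3}\sum_{q,k}|\sum_{l,m}\omega^{k(m-l)}U^l_{m,q}|^2$ and splits the inner sum into its $l=m$ part and its $l\neq m$ part. The $l=m$ part contributes exactly the $S_3$ average. The cross term vanishes by summing roots of unity over $k$. The $l\neq m$ part is bounded by Cauchy--Schwarz and Parseval by $1-(S_1\text{ average})\le\eta_1$. In your notation this is $\|\Pi_2B\|_F^2=\|PB\|_F^2+\|\Pi_2(I-\Pi_1)B\|_F^2\le\|PB\|_F^2+\|(I-\Pi_1)B\|_F^2$, which rearranges to your $\|\Pi_1B\|_F^2+\|\Pi_2B\|_F^2-\|PB\|_F^2\le\|B\|_F^2$. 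So the underlying inequality is the same.

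What your packaging buys is transparency and generality:
\begin{itemize}
\item It isolates the single property being used: the computational and Fourier bases are mutually unbiased, so the two pinchings commute with product $P$.
\item It reduces both directions to one-line projection inequalities, $P\le\Pi_i$ and $\Pi_1+\Pi_2-P\le I$.
\item It handles $S_3\subseteq S_2$ directly through basis-independence of the trace, rather than through Proposition~\ref{prop:fact2 of S3}.
\item It makes clear that the argument applies verbatim to any pair of mutually unbiased bases.
\end{itemize}
The paper's index computation is more hands-on, but it makes less visible which structural fact causes the cross term to vanish.
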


By Theorem \ref{thm:testing S3}, if a testing algorithm indicates that $C \in S_1(\eta)\cap S_2(\eta)$, then we know that $C \in S_3(2\eta)$. If $C \not\in S_1(\eta)\cap S_2(\eta)$, we may not have
$C \not \in S_3(2\eta)$. But what we can claim is $C \not \in S_3(\eta)$. 
In practice, we can just focus on quantum channels in $S_1(\eta)\cap S_2(\eta)$.
When $\eta$ is small, this difference is negligible. So for us, it suffices to find a lightweight protocol to verify whether $C \in S_1(\eta)\cap S_2(\eta)$ or not.
With {\bf TA1}, we know how to test if $C \in S_1(\eta)$. Below, we propose a lightweight protocol for verifying if $C \in S_2(\eta)$ or not.

Recall that $F_N^{-1} \ket{k} = \ket{\widehat{-k}} = U_{-k}\ket{+}^n = U_{-k}H^{\otimes n}\ket{0}^n$. Based on this, we can propose the following protocol.

\begin{quote}
{\bf Testing Algorithm 2 (TA2).} {\em Choose $k$ uniformly at random. Run $H^{\otimes n} U_k C$ on $\ket{k}$ and measure the resulting state on the computational basis. Output 1 if the outcome is $\ket{0}^n$, and output 0 otherwise.}
\end{quote}

The intuition behind the above two testing algorithms: We shall suppose that $\eta = 0$ and $C$ is a unitary channel for simplicity.
If $C$ passes {\bf TA2}, then we have $C = F_N^{-1} \Phi$ where $\Phi$ is diagonal and unitary. Note that if $C$ also passes {\bf TA1}, then in the special case that $\eta=0$ we have $C = \Psi F^{-1} $ where $\Psi$ is diagonal and unitary. Thus, if $C$ passes both {\bf TA1} and {\bf TA2}, we will have $C = F_N^{-1}$ up to a global phase. 

The correctness of {\bf TA2} is similar to that of {\bf TA1}. Similar to \cite[Theorem 1]{linden2021average}, we can estimate $\E_k[\F(C (\ket{k}\bra{k}), \ket{\widehat{-k}} \bra{\widehat{-k}} )]$ up to additive error $\varepsilon$, with success probability $1-\delta$, by running the above protocol $O(\varepsilon^{-2}\log(1/\delta))$ times.

\section{Some facts about $S_3(\eta)$}
\label{section:Some facts about S3}

In this section, we present some basic properties of the set $S_3(\eta)$.
Let $C(\rho) = \sum_{i=1}^r A_i \rho A_i^\dag$ be a Kraus representation of $C$ satisfying $\sum_{i=1}^r A_i^\dag A_i=I$, then $C\in S_3(\eta)$ implies
\be
\label{0204eq1}
1-\eta \leq \sum_{i=1}^r \E_{k,l} \bra{k} A_i \ket{\hat{k}} \bra{\hat{l}} A_i^\dag \ket{l}
= \sum_{i=1}^r \big| \E_k[\bra{k} A_i \ket{\hat{k}}] \big|^2.
\ee
Particularly, when $r=1$, i.e., when $C(\rho)=U\rho U^\dag$ is a unitary channel for some unitary $U$, we have $|\E_k[\bra{k} U \ket{k}]|^2 \geq 1-\eta$. The above also implies that 
$\E_{k,l} \bra{k} \, C(\ket{\hat{k}}\bra{\hat{l}}) \, \ket{l}$ is always a nonnegative real number.
The description (\ref{0204eq1}) of $C\in S_3(\eta)$ will be useful in the analysis of implementing the HHL algorithm using noisy QFT.

As a quantum channel, an equivalent description is that there is a unitary $U$ such that $C(\rho) = {\rm Tr}_2 [U (\rho \otimes \ket{0^m}\bra{0^m})U^\dag]$ for some integer $m$. So if $C \in S_3(\eta)$, then we have
\bea
1-\eta &\leq& \E_{k,l}\bra{k} {\rm Tr}_2 [U (\ket{\hat{k}} \bra{\hat{l}}   \otimes \ket{0^m}\bra{0^m})U^\dag] \ket{l} \nonumber \\
&=& {\rm Tr}_2 \Big[\E_{k,l} (\bra{k}\otimes I)  U \ket{\hat{k}, 0^m} \bra{\hat{l}, 0^m}U^\dag (\ket{l} \otimes I) \Big]  \nonumber \\
&=& {\rm Tr}_2 \Big[  \Big(\E_{k} (\bra{k}\otimes I)  U \ket{ \hat{k}, 0^m} \Big) \Big(\E_{k} \bra{\hat{k}, 0^m}U^\dag (\ket{k} \otimes I) \Big) \Big]  \nonumber \\
&=& \sum_{p\in\{0,1\}^m} \big|\E_{k} \bra{k,p}  U \ket{ \hat{k}, 0^m} \big|^2 .
\label{description2}
\eea
The above description of $C\in S_3(\eta)$ will be useful in the proof of Theorem \ref{thm:testing S3}.

The following result states that if $C \in S_3(\eta)$, then the phases of 
$\{\bra{k} \, C (\ket{\hat{k}}\bra{\hat{l}}) \, \ket{l} : k,l\in [N]\}$ are close to each other on average. This is a desired property when running the HHL algorithm. In the following result, we consider a general result.
\begin{prop}
\label{prop:fact1 of S3}
Assume that $x_p = r_p e^{i\theta_p}$ for $p\in [M]$, where $r_p \in[0,1]$. If $|\E_p[ x_p]|^2 \geq 1 - \eta$, then $|\E_p[e^{i\theta_p}]|^2 =\E_{p,q}[\cos(\theta_p-\theta_q)] \geq 1 - 2\eta$.
\end{prop}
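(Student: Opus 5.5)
The identity $|\E_p[e^{i\theta_p}]|^2=\E_{p,q}[\cos(\theta_p-\theta_q)]$ comes from expanding the square. Writing $|\E_p e^{i\theta_p}|^2=\E_{p,q}[e^{i(\theta_p-\theta_q)}]$, the left side is real. Hence the right side equals its real part, which is $\E_{p,q}[\cos(\theta_p-\theta_q)]$. The real work is the inequality.

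I will compare $\E_p[e^{i\theta_p}]$ with $z:=\E_p[x_p]$ along the direction of $z$. Write $z=|z|e^{i\varphi}$ and replace every $\theta_p$ by $\theta_p-\varphi$. This changes neither hypothesis nor conclusion, since both are moduli. So we may assume $z$ is real and nonnegative, that is, $\E_p[r_p\cos\theta_p]=|z|$. Then
\[
|\E_p[e^{i\theta_p}]|\ \ge\ \Re\,\E_p[e^{i\theta_p}]=\E_p[\cos\theta_p].
\]

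The key pointwise step uses $r_p\in[0,1]$:
\[
\cos\theta_p=r_p\cos\theta_p+(1-r_p)\cos\theta_p\ \ge\ r_p\cos\theta_p-(1-r_p).
\]
Averaging gives $\E_p[\cos\theta_p]\ge |z|-1+\E_p[r_p]$. Also $\E_p[r_p]\ge |\E_p[x_p]|=|z|$ by the triangle inequality. Together these give
\[
\E_p[\cos\theta_p]\ \ge\ 2|z|-1.
\]

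It remains to square, and I expect this to be the only mildly delicate point. If $\eta\ge 1/2$, the claim is trivial because $1-2\eta\le 0$. Otherwise $|z|\ge s:=\sqrt{1-\eta}>1/2$, so $2|z|-1\ge 2s-1>0$, and
\[
|\E_p[e^{i\theta_p}]|^2\ge(2s-1)^2=5-4\eta-4s.
\]
The inequality $5-4\eta-4s\ge 1-2\eta$ is equivalent to $s\le 1-\eta/2$. This holds because $(1-\eta/2)^2=1-\eta+\eta^2/4\ge 1-\eta$. This yields $|\E_p[e^{i\theta_p}]|^2\ge 1-2\eta$.
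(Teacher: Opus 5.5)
Your proof is correct, and it takes a different route from the paper's.

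The paper works entirely with the squared quantity. It expands $|\E_p[x_p]|^2=\frac{1}{M^2}\sum_{p,q} r_pr_q\cos(\theta_p-\theta_q)$ and splits the pairs $(p,q)$ into $A_+$ and $A_-$ according to the sign of $\cos(\theta_p-\theta_q)$. Using $r_pr_q\le 1$, it gets $\sum_{(p,q)\in A_+}\cos(\theta_p-\theta_q)\ge M^2(1-\eta)$. Since each such term is at most $1$, this forces $|A_-|\le M^2\eta$, so the negative-cosine pairs subtract at most $\eta$ from $\E_{p,q}[\cos(\theta_p-\theta_q)]$.

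You instead argue linearly, in the frame aligned with $\E_p[x_p]$. Your ingredients are the pointwise bound $\cos\theta_p\ge r_p\cos\theta_p-(1-r_p)$ and the triangle inequality $\E_p[r_p]\ge|\E_p[x_p]|$.

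The paper's counting argument needs no square roots or case split, and it lands directly on the pairwise form $\E_{p,q}[\cos(\theta_p-\theta_q)]$. Yours gives the stronger intermediate estimate $|\E_p[e^{i\theta_p}]|\ge 2|\E_p[x_p]|-1$, and hence $|\E_p[e^{i\theta_p}]|^2\ge(2\sqrt{1-\eta}-1)^2$. This exceeds $1-2\eta$ by $4(1-\eta/2-\sqrt{1-\eta})\ge 0$.

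Your bound is also attained. Take a fraction $\epsilon\le 1/2$ of indices with $r_p=0$, $\theta_p=\pi$, and the rest with $r_p=1$, $\theta_p=0$. Then $\E_p[x_p]=1-\epsilon$ and $\E_p[e^{i\theta_p}]=1-2\epsilon$. The only cost is the short squaring step, which you handle correctly.
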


\begin{proof}
Denote
$A_{+}=\{(p,q): \cos(\theta_p - \theta_q)\geq 0\}, A_{-}=\{(p,q): \cos(\theta_p - \theta_q) < 0\}$. Then
\beas
|\E_p[x_p]|^2
&=& \frac{1}{M^2} \sum_{p,q=1}^M r_pr_q \cos(\theta_p - \theta_q) \\
&=& \frac{1}{M^2} \sum_{(p,q)\in A_+} r_pr_q \cos(\theta_p - \theta_q)
-
\frac{1}{M^2} \sum_{(p,q)\in A_-} r_pr_q |\cos(\theta_p - \theta_q)|.
\eeas
Since $|\E_p[x_p]|^2\geq 1-\eta$, the above equality implies $|A_-| \leq M^2 \eta$ and
\[
\sum_{(p,q)\in A_+} \cos(\theta_p - \theta_q) \geq M^2 (1-\eta).
\]
Therefore,
\beas
\sum_{(p,q)\in A_+} \cos(\theta_p - \theta_q)
-
\sum_{(p,q)\in A_-} |\cos(\theta_p - \theta_q)|
\geq M^2 (1-\eta) - M^2 \eta
=M^2 (1-2\eta).
\eeas
In conclusion, we have
\beas
\E_{p,q}[\cos(\theta_p-\theta_q)]
= \frac{1}{M^2} \sum_{(p,q)\in A_+} \cos(\theta_p - \theta_q)
-
\frac{1}{M^2} \sum_{(p,q)\in A_-} |\cos(\theta_p - \theta_q)|
\geq 1-2\eta.
\eeas
This completes the proof.
\end{proof}

The next result states that if $C\in S_3(\eta)$, then $C$ is correct on any orthogonal basis on average. Combining Theorem \ref{thm:testing S3}, this implies that if $C$ is close to $F_N^{-1}$  on average both on the Fourier basis and on the computational basis, then $C$ is close to $F_N^{-1}$ on any orthogonal basis on average.

\begin{prop}
\label{prop:fact2 of S3}
Let $\{\ket{\u_k}\}_{k\in[N]}$ be an orthogonal basis and $C\in S_3(\eta)$, then
$$
\E_{k,l} \bra{\u_k} \, C\circ F_N (\ket{\u_k}\bra{\u_l}) \, \ket{\u_l} \geq 1- \eta.
$$
In particular, we have
$S_3(\eta)=\{C \text{ is a quantum channel}: \E_{k,l} \bra{\widehat{-k}} \, C(\ket{k}\bra{l}) \, \ket{\widehat{-l}} \geq 1- \eta\}$.
\end{prop}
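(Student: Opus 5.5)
The plan is to show that the quantity in the statement does not depend on the choice of orthonormal basis. Both it and the defining quantity of $S_3(\eta)$ reduce to the same Kraus-trace expression. Fix a Kraus representation $C(\rho)=\sum_{i=1}^r A_i\rho A_i^\dag$ with $\sum_i A_i^\dag A_i=I$, and read ``orthogonal basis'' as orthonormal. Expanding exactly as in (\ref{0204eq1}),
\[
\E_{k,l} \bra{\u_k} \, C\circ F_N (\ket{\u_k}\bra{\u_l}) \, \ket{\u_l}
= \sum_{i=1}^r \E_{k,l} \bra{\u_k} A_iF_N \ket{\u_k}\bra{\u_l} F_N^\dag A_i^\dag \ket{\u_l}
= \sum_{i=1}^r \big|\E_k \bra{\u_k} A_iF_N\ket{\u_k}\big|^2 .
\]
The middle expression factorizes because the $k$- and $l$-averages separate, and the second factor is the complex conjugate of the first.

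The key observation is that for any operator $M$ and any orthonormal basis $\{\ket{\u_k}\}$, we have $\E_k\bra{\u_k}M\ket{\u_k}=\frac1N{\rm Tr}(M)$. Hence the displayed quantity equals $\sum_i |{\rm Tr}(A_iF_N)|^2/N^2$ for every orthonormal basis.

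Specializing to the computational basis $\ket{\u_k}=\ket{k}$ gives $F_N\ket{k}=\ket{\hat k}$. The quantity then becomes $\E_{k,l}\bra{k}C(\ket{\hat k}\bra{\hat l})\ket{l}=\sum_i|\E_k\bra{k}A_i\ket{\hat k}|^2$, which is exactly the defining expression of $S_3(\eta)$ in (\ref{0204eq1}). So the expression is the same number for every orthonormal basis. In particular, if $C\in S_3(\eta)$ it is at least $1-\eta$, which proves the first claim.

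For the ``in particular'' part, take $\ket{\u_k}=F_N^{-1}\ket{k}=\ket{\widehat{-k}}$, so that $F_N\ket{\u_k}=\ket{k}$. The general expression then becomes $\E_{k,l}\bra{\widehat{-k}}C(\ket{k}\bra{l})\ket{\widehat{-l}}$. By basis independence this equals the $S_3$ quantity exactly, not merely bounded below by it, so the two conditions ``$\geq 1-\eta$'' coincide and the sets are equal.

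There is no real obstacle here. The only points needing care are that the quantity is independent of the chosen Kraus representation (it is, since it is defined directly through $C$), and the sign convention $F_N^{-1}\ket{k}=\ket{\widehat{-k}}$.
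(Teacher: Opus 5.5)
Your proof is correct and is essentially the paper's argument. The paper expands $\ket{\u_k}=\sum_l u_{kl}\ket{l}$ and uses $\sum_k \bar u_{kp}u_{kq}=\delta_{pq}$ to collapse the expression to $\E_{p,r}\bra{p}C(\ket{\hat p}\bra{\hat r})\ket{r}$; this is the same completeness fact you use in the form $\E_k\bra{\u_k}M\ket{\u_k}=\frac1N\Tr(M)$, after factoring through Kraus operators. Your version also makes explicit that the quantity is exactly basis-independent (equal to $\sum_i|\Tr(A_iF_N)|^2/N^2$), which is what justifies the set equality in the ``in particular'' clause.
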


\begin{proof}
Assume that $\ket{\u_k} = \sum_l u_{kl} \ket{l}$, then it is easy to compute that
\beas
\E_{k,l} \bra{\u_k} \, C\circ F_N (\ket{\u_k}\bra{\u_l}) \, \ket{\u_l} 
&=& \sum_{p,q,r,s} \E_{k} [\bar{u}_{kp}u_{kq} ] \E_l [\bar{u}_{lr} u_{ls} ] \bra{p} \, C (\ket{\hat{q}}\bra{\hat{r}}) \, \ket{s} \\
&=& \frac{1}{N^2}\sum_{p,r} \bra{p} \, C (\ket{\hat{p}}\bra{\hat{r}}) \, \ket{r} .
\eeas
Since $C\in S_3(\eta)$, we have $\frac{1}{N^2}\sum_{p,r} \bra{p} \, C (\ket{\hat{p}}\bra{\hat{r}}) \, \ket{r} \geq 1-\eta$.
In the above, we used the fact that $\{\ket{\u_k}\}_{k\in[N]}$ is an orthogonal basis so that $\E_{k} [\bar{u}_{kp}u_{kq} ] = \delta^p_q$.
\end{proof}

Intuitively, $C\in S_3(\eta) \subseteq S_1(\eta)$ implies that most diagonal entries of $C(\ket{\hat{l}}\bra{\hat{l}})$ have absolute values close to 1, so most non-diagonal entries should be small. With this intuition, we have the following result.

\begin{prop}
\label{prop:nondiagonal entries of S3}
Assume that $C\in S_1(\eta)$, then for any $k\neq 0$, we have $\E_l [\bra{k+l} \, C(\ket{\hat{l}} \bra{\hat{l}}) \, \ket{k+l}] \leq \eta$.
\end{prop}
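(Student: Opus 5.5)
The plan is to use only two facts: $C(\ket{\hat{l}}\bra{\hat{l}})$ is a density operator, and $C\in S_1(\eta)$. For each fixed $l$, set $\rho_l := C(\ket{\hat{l}}\bra{\hat{l}})$. Since $\rho_l$ is positive semidefinite with unit trace, its diagonal entries in the computational basis are nonnegative and sum to one:
\[
\sum_{j\in[N]} \bra{j}\rho_l\ket{j} = 1 .
\]

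For a fixed $k\neq 0$ (modulo $N$), the index $k+l$ differs from $l$. This gives the first step:
\[
\bra{k+l}\rho_l\ket{k+l} \;\leq\; \sum_{j\neq l} \bra{j}\rho_l\ket{j} \;=\; 1-\bra{l}\rho_l\ket{l}.
\]
Here we drop the other nonnegative terms $j\notin\{l,k+l\}$.

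Next, take the expectation over uniform $l\in[N]$:
\[
\E_l\bigl[\bra{k+l}\rho_l\ket{k+l}\bigr] \;\leq\; 1-\E_l\bigl[\bra{l}\rho_l\ket{l}\bigr].
\]
By the definition of $S_1(\eta)$ and the pure-state fidelity formula $\F(\rho,\ket{\psi}\bra{\psi})=\bra{\psi}\rho\ket{\psi}$, we have $\E_l[\bra{l}\rho_l\ket{l}]\geq 1-\eta$. The right-hand side is therefore at most $\eta$, which is the claim.

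There is no real obstacle. The only point to check is that $k+l\neq l \pmod N$ for every $l$, which holds precisely because $k\neq 0$ in $[N]$. This is what lets us bound the $(k+l)$-th diagonal entry by the total off-$l$ diagonal mass. In fact the argument gives the slightly stronger bound $\sum_{k\neq 0}\E_l[\bra{k+l}\rho_l\ket{k+l}]\leq\eta$, from which the stated inequality for each single $k$ follows.
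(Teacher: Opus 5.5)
Your proof is correct and follows essentially the same route as the paper: bound the $(k+l)$-th diagonal entry of the density operator $C(\ket{\hat{l}}\bra{\hat{l}})$ by $1-\bra{l}C(\ket{\hat{l}}\bra{\hat{l}})\ket{l}$, then average over $l$ and use the definition of $S_1(\eta)$. Your remark that the same argument bounds the sum $\sum_{k\neq 0}\E_l[\bra{k+l}\rho_l\ket{k+l}]$ by $\eta$ is also valid and slightly sharper than what the paper states.
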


\begin{proof}
As a quantum channel, we know that $C(\ket{\hat{l}} \bra{\hat{l}})$ is a density operator. So all diagonal entries are nonnegative real numbers and their sum is 1. Thus we have
\[
\bra{k+l} \, C(\ket{\hat{l}} \bra{\hat{l}}) \, \ket{k+l}
\leq 
1 - \bra{l} \, C(\ket{\hat{l}} \bra{\hat{l}}) \, \ket{l}.
\]
Since $C\in S_1(\eta)$, we have 
$\E_l[\bra{l} \, C(\ket{\hat{l}} \bra{\hat{l}}) \, \ket{l}]\geq 1-\eta$. 
This means
$\E_l[\bra{k+l} \, C(\ket{\hat{l}} \bra{\hat{l}}) \, \ket{k+l}] \leq \eta$, which is as claimed.
\end{proof}

In the end, we aim to prove Theorem \ref{thm:testing S3}. 

\begin{proof}[Proof of Theorem \ref{thm:testing S3}]
We first prove the first claim, i.e., if $C \in S_1(\eta_1) \cap S_2(\eta_2)$, then $C \in S_3(\eta_1+\eta_2)$.
We will use (\ref{description2}).
As a quantum channel, we assume that there is a unitary $U$ such that
$C(\rho) = \Tr_2(U (\rho\otimes \ket{0..0}\bra{0..0}) U^\dag)$.
Assume that $U \ket{\hat{k},0..0} = \sum_{l,m} U^k_{l,m} \ket{l,m}$.
By assumption, we have $C\in S_1(\eta_1) \cap S_2(\eta_2)$, so
\be
\label{ineqs}
\frac{1}{N} \sum_{q,k} |U^k_{k,q}|^2 \geq 1-\eta_1, \quad
\frac{1}{N^3} \sum_{q,k} \left|\sum_{l,m} \omega^{k(m-l)} U^l_{m,q} \right|^2 \geq 1-\eta_2.
\ee
For the second inequality, we can expand it as follows
\beas
1-\eta_2 &\leq& \frac{1}{N^3} \sum_{q,k} \left|\sum_{l} U^l_{l,q} + \sum_{l\neq m} \omega^{k(m-l)} U^l_{m,q}  \right|^2 \\
&=& \frac{1}{N^3} \sum_{q,k} \left[ \left|\sum_{l} U^l_{l,q}\right|^2 + \left|\sum_{l\neq m} \omega^{k(m-l)} U^l_{m,q}  \right|^2
+2 \Re\left(\sum_{l} \overline{U}^l_{l,q}\right) \left(\sum_{l\neq m} \omega^{k(m-l)} U^l_{m,q}  \right) \right].
\eeas
The summation over $k$ implies that the third term is 0. The first term is
\[
\frac{1}{N^2} \sum_{q} \left|\sum_{l} U^l_{l,q}\right|^2 
= \E_{k,l} \bra{k} \, C(\ket{\hat{k}}\bra{\hat{l}}) \, \ket{l}.
\]
By the Cauchy-Schwarz inequality, the second term satisfies
\beas
\frac{1}{N^3} \sum_{q,k} \left|
\sum_{l} \omega^{-kl}
\sum_{m\neq l} \omega^{km} U^l_{m,q}  \right|^2 
&\leq& \frac{1}{N^2} \sum_{q,k} 
\sum_{l} \left|\sum_{m\neq l} \omega^{km} U^l_{m,q}  \right|^2 \\
&=& \frac{1}{N^2} \sum_{q,k} 
\sum_{l}  \sum_{m_1,m_2\neq l} \omega^{k(m_1-m_2)} U^l_{m_1,q} \overline{U}^l_{m_2,q} \\
&=& \frac{1}{N} \sum_{q} 
\sum_{l}  \sum_{m\neq l}  |U^l_{m,q}|^2 \\
&=& \frac{1}{N} 
\sum_{l} \sum_{q}  \left(\sum_{m}  |U^l_{m,q}|^2 - |U^l_{l,q}|^2 \right).
\eeas
Since $U$ is a unitary, $\sum_{q}\sum_{m}  |U^l_{m,q}|^2=1$, so the above equals
\[
\frac{1}{N} 
\sum_{l}   \left(1 - \sum_{q} |U^l_{l,q}|^2 \right) 
= 1 -  \frac{1}{N} \sum_{l} \sum_{q} |U^l_{l,q}|^2 .
\]
By the first inequality of (\ref{ineqs}), the above is upper bounded by $\eta_1$.
From the above calculation, we finally obtain
\[
\E_{k,l} \bra{k} \, C(\ket{\hat{k}}\bra{\hat{l}}) \, \ket{l} = 
\frac{1}{N^2} \sum_{q} \left|\sum_{l} U^l_{l,q}\right|^2 \geq 1-\eta_1-\eta_2.
\]
This means $C\in S_3(\eta_1+\eta_2)$.

Next we prove that if $C\in S_3(\eta)$, then $C \in S_1(\eta) \cap S_2(\eta)$.
Similar to the computation of (\ref{0204eq1}), we have that $C\in S_1(\eta)$ if and only if 
\bes
1-\eta \leq \sum_{i=1}^r \E_{k} \bra{k} A_i \ket{\hat{k}} \bra{\hat{k}} A_i^\dag \ket{k}
= \sum_{i=1}^r  \E_k[|\bra{k} A_i \ket{\hat{k}}|^2].
\ees
As a result, we have $S_3(\eta) \subseteq S_1(\eta)$. 
About $S_2(\eta)$ we have $C\in S_2(\eta)$ if and only if 
\bes
1-\eta \leq \sum_{i=1}^r \E_{k} \bra{\widehat{-k}} A_i \ket{k} \bra{k} A_i^\dag \ket{\widehat{-k}}
= \sum_{i=1}^r  \E_k[|\bra{\widehat{-k}} A_i \ket{k}|^2].
\ees
Together with Proposition \ref{prop:fact2 of S3}, we have $S_3(\eta) \subseteq S_2(\eta)$.
\end{proof}

\section{Similar results for verifying correctness of $F_N$ on average}
\label{section:Similar results for QFT}

In the HHL algorithm, we also need to use $F_N$. With the above preliminaries, we can propose similar protocols for verifying if a quantum channel $P$ is close to $F_N$ on average. To this end, we similarly define
\beas
T_1(\eta) &:=& \Big\{P \text{ is a quantum channel}: 
\E_k\Big[ \F\Big(
P(\ket{\hat{k}}\bra{\hat{k}}), \ket{-k} \bra{-k}
\Big) \Big] \geq 1-\eta \Big\}, \\
T_2(\eta) &:=& \Big\{P \text{ is a quantum channel}: \E_k\Big[\F\Big(P (\ket{k} \bra{k}), \ket{\hat{k}} \bra{\hat{k}} \Big)\Big] \geq 1-\eta \Big\}, \\
T_3(\eta) &:=& \Big\{P \text{ is a quantum channel}: \E_{k,l} \bra{\hat{k}} \, P(\ket{k} \bra{l}) \, \ket{\hat{l}}  \geq 1- \eta \Big\}.
\eeas

\begin{prop}
\label{prop:0226}
If $P \in T_1(\eta_1) \cap T_2(\eta_2)$, then $P\in T_3(\eta_1+\eta_2)$. Conversely, if $P\in T_3(\eta)$, then $P\in T_1(\eta) \cap T_2(\eta)$.
\end{prop}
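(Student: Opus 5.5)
The plan is to reduce Proposition~\ref{prop:0226} to Theorem~\ref{thm:testing S3} by composing $P$ with fixed unitary channels, so that the sets $T_i$ become the sets $S_i$.

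Let $R$ be the reflection $\ket{k}\mapsto\ket{-k}$ (indices mod $N$), which is unitary and self-inverse. Recall that $F_N^2 = R$, that $F_N^{-1}=RF_N$, and that $\ket{\widehat{-k}} = R\ket{\hat{k}} = F_N^{-1}\ket{k}$. Define the channel $C(\rho) := R\,P(\rho)\,R$. For this $C$ I will check each condition in turn.

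\emph{Membership in $T_1$ versus $S_1$.} We have $\F(C(\ket{\hat k}\bra{\hat k}),\ket{k}\bra{k}) = \F(P(\ket{\hat k}\bra{\hat k}),\ket{-k}\bra{-k})$, by unitary invariance of the fidelity applied with $R$. Hence $P\in T_1(\eta)$ if and only if $C\in S_1(\eta)$.

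\emph{Membership in $T_2$ versus $S_2$.} We have $\F(C(\ket{k}\bra{k}),\ket{\widehat{-k}}\bra{\widehat{-k}}) = \F(P(\ket{k}\bra{k}),R\ket{\widehat{-k}}\bra{\widehat{-k}}R)$, and $R\ket{\widehat{-k}}=\ket{\hat k}$. Hence $P\in T_2(\eta)$ if and only if $C\in S_2(\eta)$.

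\emph{Membership in $T_3$ versus $S_3$.} Here I will use the second description of $S_3(\eta)$ in Proposition~\ref{prop:fact2 of S3}: $C\in S_3(\eta)$ if and only if $\E_{k,l}\bra{\widehat{-k}}C(\ket{k}\bra{l})\ket{\widehat{-l}}\ge 1-\eta$. Substituting $C=R P(\cdot) R$ and using $R\ket{\widehat{-k}}=\ket{\hat k}$, this quantity equals $\E_{k,l}\bra{\hat k}P(\ket{k}\bra{l})\ket{\hat l}$. So $C\in S_3(\eta)$ if and only if $P\in T_3(\eta)$.

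With these three equivalences, both directions follow directly from Theorem~\ref{thm:testing S3} applied to $C$. If $P\in T_1(\eta_1)\cap T_2(\eta_2)$, then $C\in S_1(\eta_1)\cap S_2(\eta_2)$, so $C\in S_3(\eta_1+\eta_2)$ and therefore $P\in T_3(\eta_1+\eta_2)$. Conversely, if $P\in T_3(\eta)$, then $C\in S_3(\eta)\subseteq S_1(\eta)\cap S_2(\eta)$, so $P\in T_1(\eta)\cap T_2(\eta)$.

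The only delicate step is the third equivalence. It relies on the second characterization of $S_3$ in Proposition~\ref{prop:fact2 of S3}, which deserves a check. The proof given there applies to the basis $\ket{\u_k}=\ket{k}$, but as written it yields $\bra{\u_k}C\circ F_N(\cdot)\ket{\u_l}$ rather than $\bra{\widehat{-k}}C(\cdot)\ket{\widehat{-l}}$. If that identification is shaky, I will verify it directly. Write $C(\rho)=\sum_i A_i\rho A_i^\dagger$. Then $\E_{k,l}\bra{\widehat{-k}}C(\ket{k}\bra{l})\ket{\widehat{-l}} = \sum_i\bigl|\E_k\bra{k}F_NA_i\ket{k}\bigr|^2 = \sum_i|\Tr(F_NA_i)/N|^2$. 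On the other hand, $\sum_i|\E_k\bra{k}A_i\ket{\hat k}|^2 = \sum_i|\Tr(A_iF_N)/N|^2$. These agree by cyclicity of the trace, so both descriptions of $S_3(\eta)$ coincide.
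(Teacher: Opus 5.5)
Your proof is correct, and it takes a different route from the paper. The paper's proof says only that the statement follows by an argument similar to the proof of Theorem~\ref{thm:testing S3}. That means redoing the Stinespring computation with the roles of $\ket{k}$, $\ket{\hat k}$, $\ket{-k}$ permuted: the Fourier expansion where the cross term vanishes, the Cauchy--Schwarz bound on the off-diagonal part, and Jensen for the converse. You instead post-compose with the reflection channel $\rho\mapsto R\rho R$ and check the exact equivalences $P\in T_i(\eta)\iff R\circ P\in S_i(\eta)$ for $i=1,2,3$. Both directions are then inherited from Theorem~\ref{thm:testing S3} with no new computation. This is shorter, and it is actually a proof where the paper only gestures at one. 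It also makes precise the correspondence between the $S$- and $T$-families. The paper's next proposition (part 1) uses the same $R$, but only one direction and only for $i=1,2$. The paper's route, if written out, would be self-contained rather than relying on Theorem~\ref{thm:testing S3}, but it would only duplicate that computation.

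The one delicate point is $i=3$. There you need the two descriptions of $S_3$ to agree exactly, since you use the implication in both directions. Your trace-cyclicity check $\sum_i|\Tr(F_NA_i)/N|^2=\sum_i|\Tr(A_iF_N)/N|^2$ settles this. A small aside: the instance of Proposition~\ref{prop:fact2 of S3} that yields the second description is $\ket{\u_k}=\ket{\widehat{-k}}$, for which $F_N\ket{\u_k}=\ket{k}$, not $\ket{\u_k}=\ket{k}$. The computation there is in fact an identity, so it also gives equality, but your direct check makes this moot.
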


\begin{proof}
This follows from an argument similar to the proof of Theorem \ref{thm:testing S3}.
\end{proof}

The next result states that if we only have a quantum channel $C$ that is close to $F_N^{-1}$ on average, then we can use it to find a quantum channel $P$ close to $F_N$ on average. 

\begin{prop}
Assume that $C$ is a quantum channel.
\begin{enumerate}
\item Let $R$ be the unitary channel such that $R (\ket{k} \bra{l}) = \ket{-k} \bra{-l}$. For $i\in \{1,2\}$, we have that
if $C \in S_i(\eta)$, then $C \circ R, R \circ C\in T_i(\eta)$.
%
%
\item If $C \in S_1(\eta_1) \cap S_2(\eta_2)$, then $C^3 \in T_1(\sqrt{\eta_1} + \sqrt{\sqrt{\eta_1} + \sqrt{\eta_2}}) \cap T_2(\sqrt{\eta_2} + \sqrt{\sqrt{\eta_1} + \sqrt{\eta_2}})$.
Consequently, if $C\in S_3(\eta)$, then $C^3 \in T_3(2\eta^{1/2}+2\sqrt{2}\eta^{1/4})$.
\end{enumerate}
\end{prop}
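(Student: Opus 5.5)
For part 1 I would reduce everything to a relabelling of $k$ together with the unitary invariance of fidelity. The key observations are that $R\ket{k}=\ket{-k}$ and $R\ket{\hat k}=\ket{\widehat{-k}}$; the second follows because $R$ commutes with $F_N$, i.e.\ $F_N R=R F_N$ and $R^2=I$.

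For $C\circ R$ and $i=1$, I compute $\F(C(R(\ket{\hat k}\bra{\hat k})),\ket{-k}\bra{-k})=\F(C(\ket{\widehat{-k}}\bra{\widehat{-k}}),\ket{-k}\bra{-k})$. Averaging over $k$ and substituting $k\mapsto -k$ (a bijection of $[N]$) gives exactly the $S_1$ quantity, so $C\circ R\in T_1(\eta)$. For $R\circ C$ I use $\F(R(\rho),R(\sigma))=\F(\rho,\sigma)$ and $R\ket{k}=\ket{-k}$, so that $\F(R\circ C(\ket{\hat k}\bra{\hat k}),\ket{-k}\bra{-k})=\F(C(\ket{\hat k}\bra{\hat k}),\ket{k}\bra{k})$. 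The case $i=2$ is the same. For $C\circ R$, one has $C(\ket{-k}\bra{-k})$ compared with $\ket{\hat k}=\ket{\widehat{-(-k)}}$, followed by relabelling. For $R\circ C$, $R$ maps $\ket{\widehat{-k}}$ to $\ket{\hat k}$.

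For part 2 I would chain three approximate steps, using monotonicity of fidelity under the channel $C$ and Proposition~\ref{prop:inequality for fidelity} twice. Write $f_k=\F(C(\ket{\hat k}\bra{\hat k}),\ket{k}\bra{k})$ and $g_k=\F(C(\ket{k}\bra{k}),\ket{\widehat{-k}}\bra{\widehat{-k}})$, so that $\E_k f_k\ge 1-\eta_1$ and $\E_k g_k\ge 1-\eta_2$.

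For $T_1$, the path is $\ket{\hat k}\to\ket{k}\to\ket{\widehat{-k}}\to\ket{-k}$.
\begin{itemize}
\item First, monotonicity gives $\F(C^2(\ket{\hat k}\bra{\hat k}),C(\ket{k}\bra{k}))\ge f_k$.
\item Proposition~\ref{prop:inequality for fidelity} with $\sigma=C(\ket{k}\bra{k})$ then gives $\F(C^2(\ket{\hat k}\bra{\hat k}),\ket{\widehat{-k}}\bra{\widehat{-k}})\ge 1-\sqrt{1-f_k}-\sqrt{1-g_k}$.
\item Averaging and applying Jensen (Fact~\ref{fact:Jensen's inequality}) to the concave square root, this average is $\ge 1-\sqrt{\eta_1}-\sqrt{\eta_2}$.
\item Next, monotonicity again gives $\F(C^3(\ket{\hat k}\bra{\hat k}),C(\ket{\widehat{-k}}\bra{\widehat{-k}}))\ge$ the previous fidelity.
\item Proposition~\ref{prop:inequality for fidelity} with $\sigma=C(\ket{\widehat{-k}}\bra{\widehat{-k}})$ and target $\ket{-k}$ then gives a lower bound of $1-\sqrt{1-(\cdot)}-\sqrt{1-f_{-k}}$.
\item Averaging with Jensen again, and relabelling $f_{-k}$, yields $1-\sqrt{\sqrt{\eta_1}+\sqrt{\eta_2}}-\sqrt{\eta_1}$.
\end{itemize}

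For $T_2$, the path is $\ket{k}\to\ket{\widehat{-k}}\to\ket{-k}\to\ket{\hat k}$, using $g_k$, then $f_{-k}$, then $g_{-k}$. The same two-stage argument gives $1-\sqrt{\sqrt{\eta_1}+\sqrt{\eta_2}}-\sqrt{\eta_2}$.

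The step needing care is the averaging. Jensen must be applied to the concave map $x\mapsto\sqrt{x}$ in the form $\E\sqrt{1-X}\le\sqrt{1-\E X}$. The intermediate bound must be kept as an averaged lower bound before it is fed into the second square root, which requires a second Jensen application. The other point is to match each chain link with the correct shifted index, such as $f_{-k}$ and $g_{-k}$, so that relabelling recovers the $S_i$ averages.

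For the consequence, Theorem~\ref{thm:testing S3} gives $C\in S_1(\eta)\cap S_2(\eta)$. The bounds above then give $C^3\in T_1(\eta')\cap T_2(\eta')$ with $\eta'=\sqrt{\eta}+\sqrt{2\sqrt{\eta}}=\eta^{1/2}+\sqrt2\,\eta^{1/4}$. Finally, Proposition~\ref{prop:0226} gives $C^3\in T_3(2\eta')=T_3(2\eta^{1/2}+2\sqrt2\,\eta^{1/4})$.
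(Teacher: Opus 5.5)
Your proposal is correct and follows essentially the paper's proof: part 1 via $R\ket{\hat k}=\ket{\widehat{-k}}$, unitary invariance of fidelity and the relabelling $k\mapsto -k$, and part 2 by two rounds of Proposition~\ref{prop:inequality for fidelity} combined with monotonicity of fidelity under $C$, followed by Theorem~\ref{thm:testing S3} and Proposition~\ref{prop:0226} for the $T_3$ consequence. The differences are minor. For $T_1$ you group the chain as $(\hat k\to k\to\widehat{-k})\to -k$, whereas the paper uses $\hat k\to(k\to\widehat{-k}\to -k)$ so that it can reuse (\ref{0225}) for both $T_1$ and $T_2$. Your explicit double application of Jensen is also more careful than the paper, which states (\ref{0225}) and the final bounds pointwise in $k$ even though they only hold after averaging.
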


\begin{proof}
(1). Suppose $C\in S_1(\eta)$. It is easy to check that $R(\ket{\hat{k}}\bra{\hat{k}}) = \ket{\widehat{-k}}\bra{\widehat{-k}}$,
so
\beas
\E_k[\F(C\circ R (\ket{\hat{k}}\bra{\hat{k}}) , \, \ket{-k}\bra{-k})] 
&=&
\E_k[\F(C (\ket{\widehat{-k}}\bra{\widehat{-k}}) , \, \ket{-k}\bra{-k} )] \\
&=&
\E_k[\F(C (\ket{\hat{k}}\bra{\hat{k}}) , \, \ket{k}\bra{k})] \\
&\geq& 
1-\eta.
\eeas
Similarly, we have
\beas
\E_k[\F(R\circ C (\ket{\hat{k}}\bra{\hat{k}}) , \, \ket{-k}\bra{-k})] 
&=&
\E_k[\F(R\circ C (\ket{\hat{k}}\bra{\hat{k}}) , \, R (\ket{k}\bra{k}) )] \\
&\geq&
\E_k[\F(C (\ket{\hat{k}}\bra{\hat{k}}) , \, \ket{k}\bra{k})] \\
&\geq& 
1-\eta.
\eeas
So $C \circ R, R \circ C\in T_1(\eta)$. The argument for $T_2(\eta)$ is similar.



(2). First, we show that $C^3 \in T_1(\sqrt{\eta_1} + \sqrt{\sqrt{\eta_1} + \sqrt{\eta_2}})$. 
By Proposition \ref{prop:inequality for fidelity}, we have
\beas
&& \F\Big(
C^3(\ket{\hat{k}}\bra{\hat{k}}), \ket{-k} \bra{-k}
\Big)\\
&\geq&
1-
\sqrt{1-\F\Big(
C^3(\ket{\hat{k}}\bra{\hat{k}}), C^2(\ket{k} \bra{k})
\Big)  }
-
\sqrt{1-\F\Big(
C^2(\ket{k} \bra{k}), \ket{-k} \bra{-k}
\Big) } \\
&\geq&
1-
\sqrt{1-\F\Big(
C(\ket{\hat{k}}\bra{\hat{k}}), \ket{k} \bra{k}
\Big)  }
-
\sqrt{1-\F\Big(
C^2(\ket{k} \bra{k}), \ket{-k} \bra{-k}
\Big) }.
\eeas
By Proposition \ref{prop:inequality for fidelity} again, we have
\beas
&& \F\Big(
C^2(\ket{k} \bra{k}), \ket{-k} \bra{-k}
\Big) \nonumber \\
&\geq& 
1 - 
\sqrt{1 - \F\Big(
C^2(\ket{k} \bra{k}), C(\ket{\widehat{-k}} \bra{\widehat{-k}})
\Big) }
-
\sqrt{1 - \F\Big(
C(\ket{\widehat{-k}} \bra{\widehat{-k}}), \ket{-k} \bra{-k}
\Big)  } \nonumber \\
&\geq& 
1 - 
\sqrt{1 - \F\Big(
C(\ket{k} \bra{k}), \ket{\widehat{-k}} \bra{\widehat{-k}}
\Big) }
-
\sqrt{1 - \F\Big(
C(\ket{\widehat{-k}} \bra{\widehat{-k}}), \ket{-k} \bra{-k}
\Big)  } .
\eeas
Since $C \in S_1(\eta_1) \cap S_2(\eta_2)$, we have
\be
\F\Big(
C^2(\ket{k} \bra{k}), \ket{-k} \bra{-k}
\Big) \geq 1 - \sqrt{\eta_1} - \sqrt{\eta_2}.
\label{0225}
\ee
Thus
\[
\F\Big(
C^3(\ket{\hat{k}}\bra{\hat{k}}), \ket{-k} \bra{-k}
\Big) 
\geq 1 - \sqrt{\eta_1} - \sqrt{\sqrt{\eta_1} + \sqrt{\eta_2}}.
\]

Next, we show that $C^3 \in T_2(\sqrt{\eta_2} + \sqrt{\sqrt{\eta_1} + \sqrt{\eta_2}})$. The proof is similar. By Proposition \ref{prop:inequality for fidelity}, we have
\beas
&& \F\Big(
C^3(\ket{k}\bra{k}), \ket{\hat{k}} \bra{\hat{k}}
\Big)\\
&\geq&
1-
\sqrt{1-\F\Big(
C^3(\ket{k}\bra{k}), C(\ket{-k} \bra{-k})
\Big) }
-
\sqrt{1-\F\Big(
C(\ket{-k} \bra{-k}), \ket{\hat{k}} \bra{\hat{k}}
\Big) } \\
&\geq&
1-
\sqrt{1-\F\Big(
C^2(\ket{k}\bra{k}), \ket{-k} \bra{-k}
\Big) }
-
\sqrt{1-\F\Big(
C(\ket{k} \bra{k}), \ket{\widehat{-k}} \bra{\widehat{-k}}
\Big) }.
\eeas
By (\ref{0225}) and the assumption that $C\in S_2(\eta_2)$, we have
\[
\F\Big(
C^3(\ket{k}\bra{k}), \ket{\hat{k}} \bra{\hat{k}}
\Big)
\geq 1 - \sqrt{\eta_2} - \sqrt{\sqrt{\eta_1} + \sqrt{\eta_2}}.
\]

The last claim now follows from Theorem \ref{thm:testing S3} and Proposition \ref{prop:0226}.
\end{proof}

\section{Worst-case HHL with average-case-correct QFT}
\label{sectionn:Worst-case HHL with average-case-correct QFT}

In this section, we analyze the performance of running the HHL algorithm using average-case-correct QFT. A key step of the HHL algorithm is quantum phase estimation. So, similar to \cite{linden2021average}, we first focus on the perfect case where all the eigenvalues of the input matrix are represented by $n$ bits precisely. We then extend the proofs and results to the general case.

\subsection{Outline of the HHL algorithm}
\label{subsection:Outline of HHL}

Let $A\in \mathbb{C}^{d\times d}$ be a sparse Hermitian matrix, $\b\in \mathbb{C}^{d}$ be a vector. Suppose we have the ability to prepare the quantum state $\ket{\b}$ efficiently. On a quantum computer, we can use the HHL algorithm to solve the linear system $A\x=\b$. The output is a quantum state close to $\ket{A^+\b}$, where $A^+$ is the pseudoinverse. Moreover, using a similar idea to HHL, we can prepare the state $\ket{f(A)\b}$ for any efficiently computable function $f(x)$, and the HHL algorithm corresponds to $f(x)=1/x$. For completeness, below we outline the procedure of preparing $\ket{f(A)\b}$ for a general $f(x)$.

Let the eigenvalue decomposition of $A$ be $A = \sum_k \sigma_k \ket{\u_k} \bra{\u_k}$. Then there exist $\beta_k\in \mathbb{C}$ such that $\ket{\b} = \sum_k \beta_k \ket{\u_k}$. To better understand the analysis, below we only state the algorithm in the perfect case by assuming that $\sigma_k = p_k/N$ for some integer $p_k$, where $N=2^n$ is determined by the accuracy. Namely, $\{\sigma_k:k\in [d]\}$ are $n$-bit numbers exactly. Without loss of generality, we assume that $f(x)$ is bounded by 1 in the interval $[-1,1]$. The following is a basic description of the HHL algorithm:

\begin{enumerate}
\item Prepare 
\[
\ket{\phi_1} = \frac{1}{\sqrt{N}} \sum_{j\in[N]} \ket{j} \ket{\b}
= \frac{1}{\sqrt{N}} \sum_{j\in[N]} \sum_{k\in [d]} \beta_k \ket{j} \ket{\u_k}.
\]
\item Apply $\sum_j \ket{j} \bra{j} \otimes e^{2\pi i A j}$
\[
\ket{\phi_2} = \frac{1}{\sqrt{N}} \sum_{j\in[N]} \sum_{k\in [d]} \beta_k e^{2\pi i j \sigma_k} \ket{j} \ket{\u_k}
= \sum_{k\in [d]} \beta_k \ket{\widehat{p_k}} \ket{\u_k}.
\]
\item Apply $F_N^{-1}\otimes I$
\[
\ket{\phi_3} = \sum_{k\in [d]} \beta_k \ket{p_k} \ket{\u_k}.
\]
\item Apply control rotations
\[
\ket{\phi_4} = \sum_{k\in [d]} \beta_k \ket{p_k} \ket{\u_k}\otimes \left(f(\sigma_k) \ket{0} + \sqrt{1-f(\sigma_k)^2} \ket{1}\right).
\]
\item Apply $F_N  \otimes I$
\beas
\ket{\phi_5} & = &\sum_{k\in [d]} \beta_k \ket{\widehat{p_k}} \ket{\u_k}\otimes \left(f(\sigma_k) \ket{0} + \sqrt{1-f(\sigma_k)^2} \ket{1}\right) \\
&= & \frac{1}{\sqrt{N}} \sum_{j\in[N]} \sum_{k\in [d]}  \beta_k e^{2\pi i j \sigma_k} \ket{j} \ket{\u_k}\otimes \left(f(\sigma_k) \ket{0} + \sqrt{1-f(\sigma_k)^2} \ket{1}\right).
\eeas
\item Apply $\sum_j \ket{j} \bra{j} \otimes e^{-2\pi i A j}$
\beas
\ket{\phi_6} 
= \frac{1}{\sqrt{N}} \sum_{j\in[N]}\ket{j}  \otimes \sum_{k\in [d]}  \beta_k  \ket{\u_k}\otimes \left(f(\sigma_k) \ket{0} + \sqrt{1-f(\sigma_k)^2} \ket{1}\right) .
\eeas
\item Apply Hadamard gates to $\ket{j}$
\bea
\ket{\phi_{7}} 
&=& \ket{0}^{\otimes n}  \otimes \sum_{k\in [d]}  \beta_k  \ket{\u_k}\otimes \left(f(\sigma_k) \ket{0} + \sqrt{1-f(\sigma_k)^2} \ket{1}\right) \nonumber \\
&=& \ket{0}^{\otimes n}  \otimes f(A) \ket{\b}\otimes \ket{0} + \ket{0}^{\otimes n}  \otimes \sqrt{1-f^2(A)} \ket{\b} \otimes \ket{1} .
\label{resulting state}
\eea
\end{enumerate}

The information of $f(A) \b$ is now included in the first term of $\ket{\phi_{7}}$. To obtain its quantum state, it suffices to measure the last qubit. We will obtain $\ket{f(A) \b}$ if the output of the measurement is 0.
We can also use amplitude amplification to increase the success probability, but that is a much less lightweight process that involves many applications of $\ket{\phi_{7}}$. Note that one application of the HHL algorithm is to estimate the expectation value $\langle \x|M|\x\rangle$ for some operator $M$, where $\ket{\x}= \ket{f(A) \b}$. We can also achieve this goal using $\ket{\phi_{7}}$. Namely, we compute $\bra{\phi_7} (I^n \otimes M \otimes \ket{0} \bra{0}) \ket{\phi_7}$. Because of the above reasons, in the following, we only focus on the analysis for $\ket{\phi_{7}}$. 

\subsection{Main result in the case when eigenvalues are represented exactly in $ n$ bits}

To introduce randomness, we shall use the same idea given in \cite{linden2021average} by considering $A_l := A + l I_d/N$ for a random $l\in[N]$, which is chosen uniformly at random. The eigenvalues of $A_l$ are $(p_k+l)/N$, where $k\in[d]$. The eigenvectors do not change. From $A_l$, we can also construct the state $\ket{f(A) \b}$. It suffices to change $f(x)$ to $f_l(x) := f(x-l/N)$. We denote the final state as $\ket{\phi_{7,l}}$. It is $\ket{\phi_{7}}$ in the case where all eigenvalues are represented by $n$-bit. But generally, it may depend on $l$.

Given $C\in S_3(\eta_1), P\in T_3(\eta_2)$, we can also run the above procedure (i.e., step 1 to step 7) by replacing $F_N^{-1}$ with $C$ and replacing $F_N$ with $P$. 
Since $C, P$ are quantum channels, the algorithm should be implemented in the form of density operators.
Thus, we denote the final state as $\rho_{7,l}$. 
When running HHL, we assume that all unitaries except $F_N, F_{N}^{-1}$ are perfect.
Our main result in the case when all eigenvalues of $A$ are represented exactly by $n$ bits is as follows.

\begin{thm}
\label{main theorem}
Let $A\in \mathbb{C}^{d\times d}$ be a sparse Hermitian matrix with $\|A\|\leq 1$, $\ket{\b}\in \mathbb{C}^{d}$ be a given state and $f(x)$ be a function bounded by 1 when $x\in[-1,1]$. Let $\ket{\phi_{7,l}} = \ket{0}\otimes f(A)\ket{\b} + \ket{0^\bot}$ be the resulting state (\ref{resulting state}) obtained by the HHL algorithm.\footnote{Here we simplified the notation of $\ket{\phi_{7,l}}$. In the setting here, it is indeed independent of $l$ and so equals $\ket{\phi_7}$.} Assume that $C\in S_3(\eta_1), P\in T_3(\eta_2)$, and the eigenvalues of $A$ are represented by $n$ bits precisely. For any $l\in [N]$, let $\rho_{7,l}$ be the resulting state by replacing $F_N^{-1}$ with $C$ and replacing $F_N$ with $P$ in the HHL algorithm with inputs $A + l I_d/N, \ket{\b}$ and $f(x-l/N)$.
Then
\be
\E_l\Big[ \F\Big( \rho_{7,l} \, , \, \ket{\phi_{7,l}} \bra{\phi_{7,l}} \Big) \Big]
\geq 
1- \sqrt{\eta_1} - \sqrt{\eta_2}.
\ee
\end{thm}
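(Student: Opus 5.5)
Our plan is a hybrid argument with an intermediate state. We split the noisy circuit at the point where only the $C$-replacement has happened, and compare everything with the ideal pure states through Proposition~\ref{prop:inequality for fidelity}. Let $\sigma_l$ be the state obtained by running steps 1--7 with $C$ in place of $F_N^{-1}$ but with the ideal $F_N$ in step 5. Proposition~\ref{prop:inequality for fidelity} then gives
\[
\F(\rho_{7,l},\ket{\phi_{7}}\bra{\phi_{7}})\ \ge\ 1-\sqrt{1-\F(\rho_{7,l},\sigma_l)}-\sqrt{1-\F(\sigma_l,\ket{\phi_7}\bra{\phi_7})}.
\]
We take $\E_l$ and apply Jensen (Fact~\ref{fact:Jensen's inequality}, with $x\mapsto\sqrt{x}$ concave) to move the expectation inside each square root. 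This reduces the theorem to two bounds:
\[
\E_l\F(\sigma_l,\ket{\phi_7}\bra{\phi_7})\ge 1-\eta_1, \qquad \E_l\F(\rho_{7,l},\sigma_l)\ge 1-\eta_2.
\]
For the second bound we would like to use monotonicity of fidelity and compare to the pure reference state. Since $\sigma_l$ may be mixed, however, we instead run the same triangle split through the ideal intermediate state $\ket{\phi_{4,l}}$. In practice this means comparing $\rho_{7,l}$ and $\sigma_l$ directly with the pure ideal state stage by stage. The cleanest route is to bound each error relative to the ideal pure state just before the faulty gate and then propagate with perfect unitaries.

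For the $C$ step, the state before it is $\sum_k\beta_k\ket{\widehat{p_k+l}}\ket{\u_k}$. Because step 4 onward (ideal) is a unitary $V_l$, the fidelity of $\sigma_l$ with $\ket{\phi_7}$ equals
\[
\F\Big((C\otimes I)(\ket{\phi_{2,l}}\bra{\phi_{2,l}}),\ \ket{\phi_{3,l}}\bra{\phi_{3,l}}\Big)
=\sum_{k,k'}|\beta_k|^2|\beta_{k'}|^2\,\bra{p_k+l}C(\ket{\widehat{p_k+l}}\bra{\widehat{p_{k'}+l}})\ket{p_{k'}+l},
\]
using orthonormality of the $\ket{\u_k}$. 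Averaging over $l$, each $(k,k')$ term becomes $\E_l\bra{a+l}C(\ket{\widehat{a+l}}\bra{\widehat{b+l}})\ket{b+l}$ with $a=p_k$, $b=p_{k'}$. In Kraus form this is $\sum_i\E_l x_i(a+l)\overline{x_i(b+l)}$ with $x_i(m)=\bra{m}A_i\ket{\hat m}$.

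This is the main obstacle. Condition (\ref{0204eq1}) controls $\sum_i|\E_m x_i(m)|^2$, not the shifted correlations. The plan is to write $x_i(m)=c_i+y_i(m)$ with $c_i=\E_m x_i(m)$ and $\E_m y_i=0$. Then
\[
\sum_i\E_l x_i(a+l)\overline{x_i(b+l)}=\sum_i|c_i|^2+\sum_i\E_l\, y_i(a+l)\overline{y_i(b+l)}.
\]
The residual has magnitude at most $\sum_i\E|y_i|^2=\sum_i\E|x_i|^2-\sum_i|c_i|^2$. Here $\sum_i\E_m|x_i(m)|^2\le1$, because by Cauchy--Schwarz over $i$ it is at most the $S_1$ quantity, which is $\le 1$. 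The residual is therefore at most $\eta_1$ in absolute value, giving a term bound $\ge 1-2\eta_1$. To reach $1-\eta_1$, we would exploit that the weights $|\beta_k|^2|\beta_{k'}|^2$ form a positive semidefinite combination. The matrix $\big(\E_l y_i(p_k+l)\overline{y_i(p_{k'}+l)}\big)_{k,k'}$ is PSD (a Gram matrix), so its quadratic form against $(|\beta_k|^2)_k$ is nonnegative. Dropping it leaves $\sum_i|c_i|^2\ge1-\eta_1$. Diagonal terms with $k=k'$ are included consistently, since eigenvalues may repeat and the Gram argument handles this automatically.

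The $P$ step is symmetric. Before step 5 the ideal state is $\sum_k\beta_k\ket{p_k+l}\ket{\u_k}\ket{g_k}$, where the $\ket{g_k}$ are normalized rotation qubits. The analogous Gram/centering argument uses the description of $T_3(\eta_2)$ via $\bra{\widehat{m}}B_i\ket{m}$ for Kraus operators $B_i$ of $P$. The inner products $\langle g_{k'}|g_k\rangle$ enter as extra weights. We would handle them by treating $\beta_k\ket{\u_k}\ket{g_k}$ as vectors, so that the quadratic form remains a nonnegative Gram-type term. This yields $\E_l\ge1-\eta_2$ for the fidelity of the $P$-faulty output with the ideal state.

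Finally, we combine the two stages. Under the perfect unitaries, $\rho_{7,l}$ equals $P$ applied to the $C$-faulty state, so we set up the split as ideal$\to$($C$ error)$\to$($P$ error). Two applications of Proposition~\ref{prop:inequality for fidelity} with monotonicity of trace distance under channels give $1-\sqrt{\eta_1}-\sqrt{\eta_2}$ after Jensen.
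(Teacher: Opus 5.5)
This is correct and is essentially the paper's proof. Your centering $x_i=c_i+y_i$ together with Gram-matrix positivity is just the variance form of the paper's Jensen step $\sum_i\E_l |X_{i,l}|^2\ge\sum_i|\E_l X_{i,l}|^2$ for $X_{i,l}=\sum_k|\beta_k|^2\bra{p_k+l}A_i\ket{\widehat{p_k+l}}$, where shift invariance gives $\E_l X_{i,l}=c_i$. The $P$ step is identical, since orthonormality of the $\ket{\u_k}$ pairs each $\ket{g_k}$ with itself, so no overlaps $\langle g_{k'}|g_k\rangle$ arise.

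Your final combination is exactly the paper's: use the intermediate state $(P\otimes I)(\ket{\phi_{4,l}}\bra{\phi_{4,l}})$, monotonicity under $P$, a single application of Proposition~\ref{prop:inequality for fidelity}, and then Jensen for $\sqrt{\cdot}$. Discard the opening hybrid through $\sigma_l$: the bound $\E_l\F(\rho_{7,l},\sigma_l)\ge 1-\eta_2$ would require $P$ to be accurate on the mixed $C$-faulty state, which $T_3(\eta_2)$ does not directly provide.
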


\begin{proof}
Because of the introduction of $l$, we denote the states appeared in the HHL algorithm as $\ket{\phi_{1,l}},\ldots,\ket{\phi_{7,l}}$. 
Since all unitaries except $F_N, F_{N}^{-1}$to are perfect, it suffices to focus on the changes on the states $\ket{\phi_{2,l}},\ldots,\ket{\phi_{5,l}}$.

By assumption, 
\[
\ket{\phi_{2,l}} = \sum_{k\in [d]} \beta_k \ket{\widehat{p_k+l}} \ket{\u_k},
\quad 
\ket{\phi_{3,l}} = \sum_{k\in[d]} \beta_k \ket{p_k+l} \ket{\u_k}.
\]
We can check that
\[
C\otimes I (\ket{\phi_{2,l}}\bra{\phi_{2,l}})
= \sum_{k_1,k_2\in [d]} \beta_{k_1} \bar{\beta}_{k_2} C\Big( \ket{\widehat{p_{k_1}+l}} \bra{\widehat{p_{k_2}+l}} \Big)  \otimes \ket{\u_{k_1}} \bra{\u_{k_2}}.
\]
Since $C$ is a quantum channel, there exist Kraus operators $A_1,\ldots,A_r$ satisfying $\sum_i A_i^\dag A_i = I$ and $C(\rho) = \sum_i A_i \rho A_i^\dag$ for any density operator $\rho$.
So we have
\be
1-\eta_1 \leq \E_{k,l} \bra{k} C(\ket{\hat{k}} \bra{\hat{l}}) \ket{l}
=\sum_{i=1}^r \E_{k,l} \bra{k} A_i \ket{\hat{k}} \bra{\hat{l}} A_i^\dag \ket{l}
=\sum_{i=1}^r |\E_k \bra{k} A_i \ket{\hat{k}} |^2.
\label{0226eq1}
\ee
Thus
\beas
&& \E_l\left[\F\Big(C\otimes I (\ket{\phi_{2,l}}\bra{\phi_{2,l}}), \ket{\phi_{3,l}}\bra{\phi_{3,l}} \Big)\right] \\
&=& \E_l \sum_{k_1,k_2\in [d]} |\beta_{k_1}|^2 |\beta_{k_2}|^2 \langle p_{k_1}+l| C\Big( \ket{\widehat{p_{k_1}+l}} \bra{\widehat{p_{k_2}+l}} \Big)  |p_{k_2}+l \rangle \\
&=& \sum_{i=1}^r \E_l \sum_{k_1,k_2\in [d]} |\beta_{k_1}|^2 |\beta_{k_2}|^2 \langle p_{k_1}+l| A_i \ket{\widehat{p_{k_1}+l}} \bra{\widehat{p_{k_2}+l}} A_i^\dag |p_{k_2}+l \rangle \\
&=& \sum_{i=1}^r \E_l \left[\left|\sum_{k\in [d]} |\beta_{k}|^2 \langle p_{k}+l| A_i \ket{\widehat{p_{k}+l}}\right|^2\right].
\eeas
By Jensen's inequality (see Fact \ref{fact:Jensen's inequality}), it is bounded from below by
\beas
&\geq& \sum_{i=1}^r \left|\E_l \left[\sum_{k\in [d]} |\beta_{k}|^2 \langle p_{k}+l| A_i \ket{\widehat{p_{k}+l}}\right] \right|^2\\
&=& \sum_{i=1}^r \left| \sum_{k\in [d]} |\beta_{k}|^2 \E_l \langle l| A_i \ket{\hat{l}} \right|^2 \\
&=& \sum_{i=1}^r \left| \E_l \langle l| A_i \ket{\hat{l}} \right|^2.
\eeas
By (\ref{0226eq1}), it is greater than $1-\eta_1$.

Denote the control rotation that maps $\ket{\phi_{3,l}}$ to $\ket{\phi_{4,l}}$ as $R_l$, which contains no error by assumption.
Note that
\[
\ket{\phi_{4,l}}
= \sum_{k\in [d]} \beta_k \ket{p_k+l} \ket{\u_k}\otimes \left(f(\sigma_k) \ket{0} + \sqrt{1-f(\sigma_k)^2} \ket{1}\right)
\]
and applying the quantum channel only increases fidelity,
it follows that
\[
\E_l\left[\F\Big(R_l(C\otimes I) (\ket{\phi_{2,l}}\bra{\phi_{2,l}}), \ket{\phi_{4,l}}\bra{\phi_{4,l}} \Big)\right]
\geq 
\E_l\left[\F\Big(C\otimes I(\ket{\phi_{2,l}}\bra{\phi_{2,l}}), \ket{\phi_{3,l}}\bra{\phi_{3,l}} \Big)\right]
\geq 1-\eta_1.
\]

Since
\[
\ket{\phi_{5,l}} = \sum_{k\in [d]} \beta_k \ket{\widehat{p_k+l}} \ket{\u_k}\otimes \left(f(\sigma_k) \ket{0} + \sqrt{1-f(\sigma_k)^2} \ket{1}\right),
\]
similar to the above estimation, we have
\[
\E_l\left[\F\Big(P\otimes I(\ket{\phi_{4,l}}\bra{\phi_{4,l}}), \, \ket{\phi_{5,l}}\bra{\phi_{5,l}} \Big)\right]
\geq 1-\eta_2.
\]
Therefore, by Proposition \ref{prop:inequality for fidelity} and Jensen's inequality, we have
\beas
&& \E_l\left[\F\Big((P\otimes I)R_l(C\otimes I) (\ket{\phi_{2,l}}\bra{\phi_{2,l}}), \ket{\phi_{5,l}}\bra{\phi_{5,l}} \Big)\right] \\
&\geq& 1- \sqrt{1-\E_l\F\Big(P\otimes I(\ket{\phi_{4,l}}\bra{\phi_{4,l}}), \ket{\phi_{5,l}}\bra{\phi_{5,l}} \Big)} \\
&& -\, \sqrt{1-\E_l\F\Big(P\otimes I(\ket{\phi_{4,l}}\bra{\phi_{4,l}}), (P\otimes I)R_l(C\otimes I) (\ket{\phi_{2,l}}\bra{\phi_{2,l}}) \Big)} \\
&\geq& 1- \sqrt{1-\E_l\F\Big(P\otimes I(\ket{\phi_{4,l}}\bra{\phi_{4,l}}), \ket{\phi_{5,l}}\bra{\phi_{5,l}} \Big)} \\
&& -\, \sqrt{1-\E_l\F\Big((\ket{\phi_{4,l}}\bra{\phi_{4,l}}), R_l(C\otimes I) (\ket{\phi_{2,l}}\bra{\phi_{2,l}}) \Big)} \\
&\geq& 1- \sqrt{\eta_1} - \sqrt{\eta_2}.
\eeas
The claimed result follows from the above estimate since the unitaries in steps 6 and 7 are also perfect with no error by assumption and can only increase the fidelity.
\end{proof}



One application of the HHL algorithm is to compute the expectation value
$\bra{\b f(A)^T} M \ket{f(A)\b}$ for some operator $M$, or more generally $\langle \phi_{7}|M|\phi_{7}\rangle$ in our notation. The above result suggests that on average, we can use ${\rm Tr}(M \rho_{7,l})$ to approximate the desired value. We state this result as follows.

\begin{thm}
\label{thm2 for perfect case}
Making the same assumption and notation as Theorem \ref{main theorem}, let $M$ be any operator with $\|M\|_2\leq 1$, denote 
$X_l = {\rm Tr}(M \ket{\phi_{7,l}} \bra{\phi_{7,l}}) , 
\widetilde{X}_l = {\rm Tr}(M \rho_{7,l}) $, then
\[
\E_l[|\widetilde{X}_l - X_l|] \leq 2 (\eta_1^{1/4}+\eta_2^{1/4}).
\]
\end{thm}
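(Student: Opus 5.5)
The plan is to convert the average fidelity bound of Theorem \ref{main theorem} into a bound on expectation values through the trace distance. For each $l$, I would write
\[
|\widetilde{X}_l - X_l| = \big|{\rm Tr}\big(M(\rho_{7,l} - \ket{\phi_{7,l}}\bra{\phi_{7,l}})\big)\big| \leq \|M\|_2 \, \|\rho_{7,l} - \ket{\phi_{7,l}}\bra{\phi_{7,l}}\|_1 \leq 2\,\T(\rho_{7,l},\ket{\phi_{7,l}}\bra{\phi_{7,l}}).
\]
This follows from von Neumann's trace inequality (Fact \ref{fact:von Neumann's trace inequality}). Applying it to $e^{i\theta}M$, for the phase $\theta$ that makes the trace real and nonnegative, takes care of the absolute value of a complex trace.

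Next I would bound the trace distance by the fidelity using (\ref{ineq1}): $\T \le \sqrt{1-\F}$. This gives $|\widetilde{X}_l - X_l| \le 2\sqrt{1-\F_l}$, where $\F_l := \F(\rho_{7,l},\ket{\phi_{7,l}}\bra{\phi_{7,l}})$. Taking expectation over $l$ and using Jensen's inequality (Fact \ref{fact:Jensen's inequality}) with the concave function $x\mapsto\sqrt{x}$ yields
\[
\E_l|\widetilde{X}_l - X_l| \le 2\sqrt{1-\E_l \F_l}.
\]

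Theorem \ref{main theorem} gives $1-\E_l\F_l \le \sqrt{\eta_1}+\sqrt{\eta_2}$. Hence $\E_l|\widetilde{X}_l - X_l| \le 2\sqrt{\sqrt{\eta_1}+\sqrt{\eta_2}}$. Finally, subadditivity of the square root, $\sqrt{a+b}\le\sqrt a+\sqrt b$, bounds this by $2(\eta_1^{1/4}+\eta_2^{1/4})$, which is the claim.

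There is no real obstacle. The only points needing care are the complex-valued trace with a general (non-Hermitian) $M$, handled by the phase rotation or by Hölder's inequality, and applying Jensen in the correct direction, namely concavity of the square root, after the pointwise bound and before invoking the averaged fidelity bound.
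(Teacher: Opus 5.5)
Your proposal is correct and follows essentially the same route as the paper. Both arguments pass to the trace norm via von Neumann's trace inequality, then use $\T\le\sqrt{1-\F}$ with Jensen for the concave square root, invoke Theorem \ref{main theorem}, and finish with subadditivity of the square root; your phase rotation of $M$, which handles the absolute value of a complex trace, is a small point of care that the paper leaves implicit.
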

\begin{proof}
By von Neumann's trace inequality (see Fact \ref{fact:von Neumann's trace inequality}), for any two operators $A,B$ we have ${\rm Tr}(AB) \leq \|A\|_2 \|B\|_{1}$, where $\|\cdot\|_2$ is the operator norm and $\|\cdot\|_{1}$ is the trace norm. So
by (\ref{ineq1}) and Theorem \ref{main theorem}
we have
\beas
\E_l[|\widetilde{X}_l - X_l|]
&=& \E_l[|{\rm Tr}(M (\rho_{7,l}-\ket{\phi_{7,l}} \bra{\phi_{7,l}}))|] \\
&\leq& \|M\|_2 \E_l[\|\rho_{7,l}-\ket{\phi_{7,l}} \bra{\phi_{7,l}}\|_{1}] \\
&=& 2 \|M\|_2 \E_l[\T(\rho_{7,l},\ket{\phi_{7,l}} \bra{\phi_{7,l}})] \\
&\leq& 2 \sqrt{1-\E_l[\F(\rho_{7,l},\ket{\phi_{7,l}} \bra{\phi_{7,l}})]} \\
&\leq& 2 \sqrt{\sqrt{\eta_1}+\sqrt{\eta_2}} \\
&\leq& 2 (\eta_1^{1/4}+\eta_2^{1/4}).
\eeas
In the second equality, we used the definition of trace distance. 
\end{proof}


\subsection{Main result in the general case}

In the general case, the eigenvalues of $A$ may be specified by more than $n$ bits. The analysis here will be a little complicated. Recall that $N=2^n$. For each $\sigma_k$, denote $p_k = \lfloor \sigma_k N \rfloor$, i.e., $\sigma_k = p_k/N + \varepsilon$ for some error $\varepsilon$. Denote
\be
\label{good sets}
G_k = \{p_k-K+1,\ldots,p_k+K\},
\ee
which stands for the set of good candidates that can be used to approximate $\sigma_k$. Then $|G_k| = 2K$.
Let
\be
\label{eq for general case}
\ket{\psi_k}  = \frac{1}{\sqrt{N}} \sum_{j\in[N]} e^{2\pi i j \sigma_k} \ket{j} 
= \sum_{g \in G_k} \alpha_{kg} \ket{\hat{g}} + \sum_{g \notin G_k} \alpha_{kg} \ket{\hat{g}}
\ee
for some $\alpha_{kg}\in\mathbb{C}$.
As shown in \cite[Proposition 2]{linden2021average}, for any constant $K>1$, $\sum_{g \notin G_k} |\alpha_{kg}|^2 \leq 2/(K-1)$. This is negligible when $K$ is large enough.

\begin{lem}
\label{lem1:general case}
Let $C\in S_3(\eta)$ be a quantum channel with Kraus representation $C(\rho) = \sum_{i=1}^r A_i \rho A_i^\dag$ for any density operator $\rho$. Let $G \subseteq [N]$ be a subset and
\beas
\ket{\psi_l} = \sum_{g \in G} \alpha_{g} \ket{\widehat{g+l}} + \sum_{g \notin G} \alpha_{g} \ket{\widehat{g+l}}, \quad
\ket{\varphi_l} = \sum_{g \in G} \alpha_{g} \ket{{g+l}} + \sum_{g \notin G} \alpha_{g} \ket{{g+l}}.
\eeas
Assume that $\sum_{g\in G}|\alpha_g|^2 = 1-\delta$, then
\[
\E_l\Big[\bra{\varphi_l} A_i \ket{\psi_l} \Big]
= (1-\delta) \E_l\Big[\bra{l} A_i \ket{\hat{l}\,} \Big]
+ Err_i,
\]
for some error term $Err_i$. Moreover, $\sum_{i=1}^r |Err_i|^2 \leq 2 \eta |G|^2 + 18 \delta.$
\end{lem}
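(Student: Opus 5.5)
The plan is to expand $\E_l\bra{\varphi_l}A_i\ket{\psi_l}$ in the Fourier coefficients $\alpha_g$ and sort the terms into a main diagonal part and three error pieces. Each piece will be controlled by one of two tools: the nondiagonal estimate of Proposition~\ref{prop:nondiagonal entries of S3} (available since $S_3(\eta)\subseteq S_1(\eta)$ by Theorem~\ref{thm:testing S3}), or the Kraus normalization $\sum_i A_i^\dag A_i=I$.

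\textbf{Notation.} Write $\ket{\psi_l}=\ket{\psi_l^G}+\ket{\psi_l^B}$ and $\ket{\varphi_l}=\ket{\varphi_l^G}+\ket{\varphi_l^B}$, where the $G$-part sums over $g\in G$ and the $B$-part over $g\notin G$. For $t\in[N]$ set
\[
c_i(t)=\E_l\bra{l+t}A_i\ket{\hat l}.
\]
Since $l\mapsto l+g$ is a bijection of $[N]$, we have $\E_l\bra{g+l}A_i\ket{\widehat{h+l}}=c_i(g-h)$.

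\textbf{Decomposition.} Then
\[
\E_l\bra{\varphi_l^G}A_i\ket{\psi_l^G}
=\sum_{g\in G}|\alpha_g|^2c_i(0)+X_i
=(1-\delta)\E_l\bra{l}A_i\ket{\hat l}+X_i,
\qquad
X_i=\sum_{g\neq h\in G}\bar\alpha_g\alpha_h\,c_i(g-h).
\]
Hence $Err_i=X_i+Y_i+Z_i$, where
\[
Y_i=\E_l\bra{\varphi_l^B}A_i\ket{\psi_l},\qquad
Z_i=\E_l\bra{\varphi_l^G}A_i\ket{\psi_l^B}.
\]
(There is no separate $\delta c_i(0)$ term here: writing the main part with coefficient $1-\delta$ is exactly the $G$-diagonal contribution.) If one instead prefers to expand fully with coefficient $1$, the discrepancy $\delta c_i(0)$ is a fourth error piece, and it satisfies $\sum_i|\delta c_i(0)|^2\le\delta^2\sum_i\E_l|\bra{l}A_i\ket{\hat l}|^2\le\delta^2$.

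\textbf{Bounding $X_i$.} Apply Cauchy--Schwarz over the pairs $(g,h)$:
\[
|X_i|^2\le\Big(\sum_{g,h}|\alpha_g|^2|\alpha_h|^2\Big)\sum_{g\neq h\in G}|c_i(g-h)|^2\le\sum_{g\neq h\in G}|c_i(g-h)|^2 .
\]
For fixed $t\neq0$, Jensen's inequality (Fact~\ref{fact:Jensen's inequality}) gives
\[
\sum_i|c_i(t)|^2\le\E_l\sum_i|\bra{l+t}A_i\ket{\hat l}|^2=\E_l\bra{l+t}C(\ket{\hat l}\bra{\hat l})\ket{l+t}\le\eta ,
\]
where the last step is Proposition~\ref{prop:nondiagonal entries of S3}. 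Summing over $i$ and over the at most $|G|^2$ pairs yields $\sum_i|X_i|^2\le\eta|G|^2$.

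\textbf{Bounding $Y_i$ and $Z_i$.} Use Jensen's inequality together with Cauchy--Schwarz in the vector sense:
\[
\sum_i|Y_i|^2\le\E_l\|\varphi_l^B\|^2\sum_i\|A_i\psi_l\|^2=\delta\cdot1 ,
\]
since $\sum_i\|A_i\psi\|^2=\bra{\psi}\sum_iA_i^\dag A_i\ket{\psi}=\|\psi\|^2=1$. Likewise
\[
\sum_i|Z_i|^2\le\|\varphi^G\|^2\sum_i\|A_i\psi_l^B\|^2\le\delta .
\]

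\textbf{Combining.} Use $|x+y|^2\le2|x|^2+2|y|^2$ and then $|a+b+c|^2\le3(|a|^2+|b|^2+|c|^2)$ on the remaining pieces:
\[
\sum_i|Err_i|^2\le2\eta|G|^2+2\cdot3\,(\delta^2+\delta+\delta)\le2\eta|G|^2+18\delta ,
\]
which is the claimed bound.

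\textbf{Main obstacle.} The key step is the off-diagonal bound on $X_i$. It requires converting the averaged shifted matrix elements $c_i(t)$, summed over the Kraus index, into the diagonal entries $\bra{l+t}C(\ket{\hat l}\bra{\hat l})\ket{l+t}$ via Jensen's inequality, and then using the $S_1$ property through Proposition~\ref{prop:nondiagonal entries of S3}. This step is also the source of the factor $|G|^2$. The bookkeeping of the shift invariance in $l$ needs care but is routine.
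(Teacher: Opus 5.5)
Your proof is correct and follows essentially the same route as the paper. You use the same good/bad split. The $G$-off-diagonal piece (the paper's $Err_{i2}$) is controlled by Proposition~\ref{prop:nondiagonal entries of S3} via $S_3(\eta)\subseteq S_1(\eta)$, and the bad cross terms (the paper's $Err_{i1}$) by the Kraus normalization. The only differences are cosmetic: you apply Cauchy--Schwarz over pairs $(g,h)$ before Jensen instead of expanding the quadruple sum. The remark about $\delta c_i(0)$ muddles your final bookkeeping, but since $Err_i=X_i+Y_i+Z_i$ you actually obtain $\sum_i|Err_i|^2\le 2\eta|G|^2+8\delta$, which is stronger than needed.
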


\begin{proof}
For convenience, we denote $\ket{\psi_l} = \sqrt{1-\delta} \, \ket{\psi_{l1}} + \sqrt{\delta} \, \ket{\psi_{l2}},
\ket{\varphi_l} = \sqrt{1-\delta} \, \ket{\varphi_{l1}} + \sqrt{\delta} \, \ket{\varphi_{l2}}$ according to their decompositions. Then
\beas
\E_l\Big[\bra{\varphi_l} A_i \ket{\psi_l} \Big]
&=& (1-\delta)\E_l\Big[\bra{\varphi_{l1}} A_i \ket{\psi_{l1}} \Big] + Err_{i1},
\eeas
where $Err_{i1} = \sqrt{\delta(1-\delta)} 
\E_l\bra{\psi_{l1}} A_i \ket{\varphi_{l2}} + 
\sqrt{\delta(1-\delta)} 
\E_l\bra{\psi_{l2}} A_i \ket{\varphi_{l1}} +
\delta \E_l\bra{\psi_{l2}} A_i \ket{\varphi_{l2}}$.

We can compute that
\beas
(1-\delta)\E_l\Big[\bra{\varphi_{l1}} A_i \ket{\psi_{l1}} \Big] 
&=& \sum_{g_1,g_2\in G} \bar{\alpha}_{g_1} \alpha_{g_2} \E_l \bra{g_1+l} A_i \ket{\widehat{g_2+l}} \\
&=& \sum_{g\in G} |\alpha_{g}|^2 \E_l \bra{l} A_i \ket{\hat{l}} 
+ Err_{i2} \\
&=& (1-\delta) \E_l \bra{l} A_i \ket{\hat{l}} 
+ Err_{i2},
\eeas
where
\[
Err_{i2} = \sum_{g_1\neq g_2\in G} \bar{\alpha}_{g_1} \alpha_{g_2} \E_l \bra{g_1+l} A_i \ket{\widehat{g_2+l}}.
\]
The overall error is $Err_i := Err_{i1} +Err_{i2}$.

From the Kraus representation of $C$, it is easy to check that for any states $\ket{a_l}, \ket{b_l}$, we have 
\be
\E_l\left[\sum_{i=1}^r |\bra{a_l} A_i \ket{b_l}|^2 \right]
=\E_l \bra{a_l}\, C(\ket{b_l} \bra{b_l}) \, \ket{a_l}.
\label{0225eq1}
\ee
So we have
\beas
\sum_{i=1}^r |Err_{i1}|^2 &\leq& 3\E_l\sum_{i=1}^r 
\Big( \delta(1-\delta) |\bra{\psi_{l1}} A_i \ket{\varphi_{l2}}|^2 + \delta(1-\delta) |\bra{\psi_{l2}} A_i \ket{\varphi_{l1}}|^2 +\delta^2 |\bra{\psi_{l2}} A_i \ket{\varphi_{l2}}|^2 \Big) \\
&=& 3 \delta(1-\delta) \E_{l} \bra{\psi_{l1}} C (\ket{\varphi_{l2}} \bra{\varphi_{l2}} ) \ket{\psi_{l1}} 
+
3 \delta(1-\delta) \E_{l} \bra{\psi_{l2}} C (\ket{\varphi_{l1}} \bra{\varphi_{l1}} ) \ket{\psi_{l2}} \\
&& +\, 3 \delta^2 
\E_{l} \bra{\psi_{l2}} C (\ket{\varphi_{l2}} \bra{\varphi_{l2}} ) \ket{\psi_{l2}} \\
&\leq& 6 \delta(1-\delta) + 3\delta^2 \leq 9\delta.
\eeas
And
\beas
\sum_{i=1}^r |Err_{i2}|^2
&=& \sum_{i=1}^r 
\left|\sum_{g_1\neq g_2\in G} \bar{\alpha}_{g_1} \alpha_{g_2} \E_l \bra{g_1+l} A_i \ket{\widehat{g_2+l}} \right|^2 \\
&\leq& \E_l \left[\sum_{i=1}^r 
\left|\sum_{g_1\neq g_2  \in G} \bar{\alpha}_{g_1} \alpha_{g_2} 
\bra{g_1+l} \, A_i \ket{\widehat{g_2+l}} \right|^2 \right]\\
&=& 
 \sum_{g_1\neq g_2, g_3\neq g_4  \in G}
 \bar{\alpha}_{g_1} \alpha_{g_2} 
 \alpha_{g_3} \bar{\alpha}_{g_4} 
\E_l \left[\sum_{i=1}^r \bra{g_1+l} \, A_i \ket{\widehat{g_2+l}}  
\bra{\widehat{g_3+l}} \, A_i^\dag \ket{g_4+l} \right].
\eeas
By the Cauchy-Schwarz inequality, the above is smaller than
\[
\leq \sum_{g_1\neq g_2, g_3\neq g_4  \in G}
 |\bar{\alpha}_{g_1} \alpha_{g_2} 
 \alpha_{g_3} \bar{\alpha}_{g_4}| 
 \sqrt{\E_l\left[\sum_{i=1}^r |\bra{g_1+l} \, A_i \ket{\widehat{g_2+l}}|^2 \right] }
 \sqrt{\E_l\left[\sum_{i=1}^r |\bra{\widehat{g_3+l}} \, A_i^\dag \ket{g_4+l}|^2\right]}.
\]
By (\ref{0225eq1}), it is
\beas
&=& \sum_{g_1\neq g_2, g_3\neq g_4  \in G}
 |\bar{\alpha}_{g_1} \alpha_{g_2} 
 \alpha_{g_3} \bar{\alpha}_{g_4}| 
 \sqrt{\E_l \bra{g_1+l} \, C( \ket{\widehat{g_2+l}} \bra{\widehat{g_2+l}}) \, \ket{g_1+l}} \\
 && \hspace{2cm} \times \,
 \sqrt{\E_l \bra{g_4+l} \, C( \ket{\widehat{g_3+l}} \bra{\widehat{g_3+l}}) \, \ket{g_4+l}}.
\eeas
By Proposition \ref{prop:nondiagonal entries of S3}, it is smaller than
\[
\leq 
\eta   \sum_{g_1\neq g_2, g_3\neq g_4  \in G}
 |\bar{\alpha}_{g_1} \alpha_{g_2} 
 \alpha_{g_3} \bar{\alpha}_{g_4}| 
\leq \eta |G|^2 (1-\delta)^2   \leq  \eta |G|^2.
\]

Finally, we have
\[
\sum_{i=1}^r |Err_{i}|^2\leq 2 \sum_{i=1}^r |Err_{i1}|^2 + 2 \sum_{i=1}^r |Err_{i2}|^2
\leq 2\eta |G|^2 + 18 \delta.
\]
This completes the proof.
\end{proof}

\begin{thm}
\label{main theorem:general case}
Using the same notation as Theorem \ref{main theorem}. Assume that $C\in S_3(\eta_1), P\in T_3(\eta_2)$. 
Then for any integer $K\geq 1$,
\be
\E_l\Big[\F\Big( \rho_{7,l}, \, \ket{\phi_{7,l}} \bra{\phi_{7,l}} \Big)\Big]
\geq 
1 - (\eta_1^{1/2}+\eta_2^{1/2})
- 2 \sqrt{K} (\eta_1^{1/4}+\eta_2^{1/4})
- \frac{4\sqrt{5}}{(K-1)^{1/4}}.
\ee
\end{thm}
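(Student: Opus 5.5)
I would mirror the proof of Theorem \ref{main theorem}, replacing the exact computation of the fidelity after each noisy QFT step by Lemma \ref{lem1:general case}. In the general case, after step 2 the state is $\ket{\phi_{2,l}}=\sum_k\beta_k\ket{\psi_{k,l}}\ket{\u_k}$. Here $\ket{\psi_{k,l}}=\sum_g\alpha_{kg}\ket{\widehat{g+l}}$ is the shifted version of (\ref{eq for general case}). Shifting by $l$ does not change the coefficients $\alpha_{kg}$, since the eigenvalues of $A_l$ are $\sigma_k+l/N$. The ideal state after step 3 is $\ket{\phi_{3,l}}=\sum_k\beta_k\ket{\varphi_{k,l}}\ket{\u_k}$ with $\ket{\varphi_{k,l}}=\sum_g\alpha_{kg}\ket{g+l}$. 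I take as reference the state obtained from $\ket{\phi_{2,l}}$ by the exact $F_N^{-1}$. Then the ideal chain $F_N^{-1}$, controlled rotation, $F_N$ maps $\ket{\phi_{2,l}}$ to a pure state. Its fidelity with $\ket{\phi_{7,l}}$, after steps 6 and 7, can be controlled separately using the tail bound $\sum_{g\notin G_k}|\alpha_{kg}|^2\le 2/(K-1)$. The rotation angles are determined by $g$ rather than $\sigma_k$, so for $g\in G_k$ the correct value is recovered only approximately. I will assume the intended meaning absorbs this, or bound the resulting contribution by the $\delta$-type term.

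\textbf{Step 1 (the $C$ step).} Using the Kraus operators $A_i$ of $C$, exactly as in the proof of Theorem \ref{main theorem},
\[
\E_l\,\F\big((C\otimes I)(\ket{\phi_{2,l}}\bra{\phi_{2,l}}),\ket{\phi_{3,l}}\bra{\phi_{3,l}}\big)=\sum_i\E_l\Big|\sum_k|\beta_k|^2\bra{\varphi_{k,l}}A_i\ket{\psi_{k,l}}\Big|^2 .
\]
Jensen's inequality bounds this below by $\sum_i|\sum_k|\beta_k|^2\E_l\bra{\varphi_{k,l}}A_i\ket{\psi_{k,l}}|^2$. I apply Lemma \ref{lem1:general case} with $G=G_k-p_k$ (after recentering), so $|G|=2K$ and $\delta_k\le 2/(K-1)$. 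The inner quantity becomes $a\,\E_l\bra{l}A_i\ket{\hat l}+E_i$, where $a=\sum_k|\beta_k|^2(1-\delta_k)\ge 1-2/(K-1)$ and $E_i=\sum_k|\beta_k|^2Err_{i,k}$. By convexity, $\sum_i|E_i|^2\le 8\eta_1K^2+36/(K-1)$.

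To handle $\sum_i|a x_i+E_i|^2$, I use the triangle inequality in $\ell^2$:
\[
\Big(\sum_i|ax_i+E_i|^2\Big)^{1/2}\ge a\sqrt{1-\eta_1}-\sqrt{8\eta_1K^2+36/(K-1)} ,
\]
using (\ref{0226eq1}). This gives a fidelity of at least $1-\eta_1-O(\sqrt{\eta_1}K)-O((K-1)^{-1/2})$.

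\textbf{Remaining steps and the main obstacle.} Next I pass through $R_l$ by monotonicity. I treat $P$ symmetrically: write $F_N$ applied to $\ket{\varphi_{k,l}}$ as a Fourier superposition and use the analogue of Lemma \ref{lem1:general case} for $T_3(\eta_2)$, proved via Proposition \ref{prop:0226}. I then combine the two steps with Proposition \ref{prop:inequality for fidelity}, as in Theorem \ref{main theorem}, and take square roots with Jensen. This yields terms $\sqrt{\eta_j}$, $\sqrt{K}\eta_j^{1/4}$ from $\sqrt{\sqrt{\eta_j}K}$-type errors, and $(K-1)^{-1/4}$ from $\sqrt{\sqrt{1/(K-1)}}$, matching the shape of the claimed bound. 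The main obstacle is bookkeeping the error so that it enters as $\sqrt{K}\eta^{1/4}$ rather than $K\eta^{1/2}$. A cruder chaining gives $K\sqrt{\eta}$ inside a square root, which should still be fine, since $\sqrt{K\sqrt\eta}=\sqrt K\eta^{1/4}$. The constant $4\sqrt5$ should come from $\sqrt{18\cdot 2/(K-1)}$-type terms collected over both steps. A second delicate point is checking that the reference ideal state really coincides with $\ket{\phi_{7,l}}$ up to the tail mass; I would verify this first.
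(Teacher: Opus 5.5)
Your plan is the paper's proof: Kraus operators and Jensen, Lemma~\ref{lem1:general case} per eigencomponent, the Cauchy--Schwarz step (equivalently, your $\ell^2$ triangle inequality) giving $E_1=(1-\delta)^2(1-\eta_1)-2\sqrt{2\eta_1|G|^2+18\delta}$ and its analogue $E_2$ for $P$, then Proposition~\ref{prop:inequality for fidelity} with Jensen to reach $1-\sqrt{1-E_1}-\sqrt{1-E_2}$. Your convexity treatment of the $k$-dependent $\delta_k$ and $Err_{i,k}$ is tidier than the paper's single $\delta$. On the bookkeeping you flag, bound $\|E\|=(\sum_i|E_i|^2)^{1/2}\le 3\sqrt{2/(K-1)}+2K\sqrt{\eta_j}$ by Minkowski, using the two separate pieces in the proof of Lemma~\ref{lem1:general case}: the paper's combined bound $\sum_i|E_i|^2\le 8\eta_jK^2+36/(K-1)$ only yields coefficient $2^{5/4}$, whereas the Minkowski bound gives exactly $2\sqrt{K}\eta_j^{1/4}$.

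Your reference-state concern is moot, because in the general case $\ket{\phi_{7,l}}$ denotes the output of HHL run with the exact $F_N^{\pm1}$ on $A_l$ (the paper remarks that it may then depend on $l$), so your ideal chain coincides with it exactly---fortunately, since the $f(g/N)$ versus $f(\sigma_k)$ discrepancy could not be absorbed into a $\delta$-type term without some regularity assumption on $f$.
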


\begin{proof}
As before, we introduce a random variable $l\in[N]$ and denote
\[
\ket{\psi_{kl}} = \frac{1}{\sqrt{N}} \sum_{j\in[N]} e^{2\pi i j (\sigma_k+l/N)} \ket{j} 
= \sum_{g \in G_k} \alpha_{kg} \ket{\widehat{g+l}} + \sum_{g \notin G_k} \alpha_{kg} \ket{\widehat{g+l}},
\]
where $G_k$ is defined in (\ref{good sets}).
In the general case, the state in step 2 of the HHL algorithm (see Subsection \ref{subsection:Outline of HHL}) becomes
$
\ket{\phi_2} = 
\sum_{k\in [d]} \beta_k
\ket{\psi_k}
\ket{\u_k},
$
where $\ket{\psi_k}$ is of the form (\ref{eq for general case}).
When introducing a random variable $l$, we obtain
$
\ket{\phi_{2,l}} = 
\sum_{k\in [d]} \beta_k
\ket{\psi_{kl}}
\ket{\u_k}.
$
By applying $C\in S_3(\eta_1)$ to it, we obtain a state that is close to
$
\ket{\phi_{3,l}} = 
\sum_{k\in [d]} \beta_k \ket{\varphi_{kl}}
\ket{\u_k},
$
where
\[
\ket{\varphi_{kl}} = \sum_{g \in G_k} \alpha_{kg} \ket{g+l} + \sum_{g \notin G_k} \alpha_{kg} \ket{g+l}.
\]
We now aim to estimate
$\E_l[\F(C\otimes I (\ket{\phi_{2,l}} \bra{\phi_{2,l}}), \ket{\phi_{3,l}} \bra{\phi_{3,l}}) ] $. Similar to the proof of Theorem \ref{main theorem}, we let $C(\rho) = \sum_{i=1}^r A_i \rho A_i^\dag$ be the Kraus representation of $C$, where $\sum_{i=1}^r A_i^\dag A_i = I$. Then we have
\beas
&& \E_l[\F(C\otimes I (\ket{\phi_{2,l}} \bra{\phi_{2,l}}), \ket{\phi_{3,l}} \bra{\phi_{3,l}}) ] \\
&=& \E_l \sum_{k_1,k_2\in [d]} |\beta_{k_1}|^2 |\beta_{k_2}|^2 \langle \varphi_{k_1l} | C\Big( \ket{\psi_{k_1l}} \bra{\psi_{k_2l}} \Big)  |\varphi_{k_2l} \rangle \\
&=& \E_l \sum_{i=1}^r \sum_{k_1,k_2\in [d]} |\beta_{k_1}|^2 |\beta_{k_2}|^2 \langle \varphi_{k_1l} | A_i  \ket{\psi_{k_1l}} \bra{\psi_{k_2l}} A_i^\dag  |\varphi_{k_2l} \rangle \\
&=& \sum_{i=1}^r  \E_l \left[\left|\sum_{k\in [d]} |\beta_{k}|^2 \langle \varphi_{kl} | A_i  \ket{\psi_{kl}} \right|^2 \right] \\
&\geq& \sum_{i=1}^r  \left| \E_l \left[\sum_{k\in [d]} |\beta_{k}|^2 \langle \varphi_{kl} | A_i  \ket{\psi_{kl}}  \right] \right|^2 .
\eeas

By Lemma \ref{lem1:general case}, the above equals
\beas
&=& \sum_{i=1}^r  \left| \sum_{k\in [d]} |\beta_{k}|^2 
\Big( (1-\delta) \E_l \bra{l} A_i  \ket{\hat{l}} + Err_i
\Big)\right|^2 \\
&=& \sum_{i=1}^r  \left| (1-\delta) \E_l \bra{l} A_i  \ket{\hat{l}} + Err_i \right|^2 .
\eeas
Expanding it, it is bounded from below by
\beas
&\geq& (1-\delta)^2 \sum_{i=1}^r |\E_l[\bra{l} A_i  \ket{\hat{l}}]|^2 
- 2 (1-\delta) \sum_{i=1}^r |\E_l[\bra{l} A_i  \ket{\hat{l}}]| |Err_i| \\
&\geq& (1-\delta)^2 \sum_{i=1}^r |\E_l[\bra{l} A_i  \ket{\hat{l}}]|^2 
- 2 (1-\delta) \sqrt{\sum_{i=1}^r |\E_l[\bra{l} A_i  \ket{\hat{l}}]|^2} \sqrt{\sum_{i=1}^r |Err_i|^2} \\
&\geq& (1-\delta)^2(1-\eta_1)
- 2 \sqrt{2\eta_1 |G|^2+18\delta}.
\eeas
In the above, we used the fact $1-\eta_1\leq \E_{k,l} \bra{k} C(\ket{\hat{k}} \bra{\hat{l}}) \ket{l}
=\sum_{i=1}^r |\E_l \bra{l} A_i \ket{\hat{l}} |^2 \leq 1$ and Lemma \ref{lem1:general case} again. In summary, we have 
\[
\E_l[\F(C\otimes I (\ket{\phi_{2,l}} \bra{\phi_{2,l}}), \ket{\phi_{3,l}} \bra{\phi_{3,l}}) ]
\geq (1-\delta)^2(1-\eta_1)
- 2 \sqrt{2\eta_1 |G|^2+18\delta} =: E_1.
\]

Similarly, we have
\[
\E_l[\F(P\otimes I (\ket{\phi_{4,l}} \bra{\phi_{4,l}}), \ket{\phi_{5,l}} \bra{\phi_{5,l}}) ] \geq 
(1-\delta)^2(1-\eta_2)
- 2\sqrt{2\eta_2 |G|^2+18\delta} =: E_2.
\]
The remaining part is similar to that of Theorem \ref{main theorem} based on Proposition \ref{prop:inequality for fidelity} and Jensen's inequality. Finally, we have
\beas
\E_l\Big[\F\Big( \rho_{7,l}, \ket{\phi_{7,l}} \bra{\phi_{7,l}} \Big)\Big] 
&\geq& 1 -  \sqrt{1-E_1} - \sqrt{1-E_2} .
\eeas
Substituting $\delta=2/(K-1)$ and $|G|=2K$ into the above lower bound, and simplifying it leads to the claimed result.
\end{proof}

From the above theorem, to make the bound small, we should set $\eta_1,\eta_2 \approx K^{-2}$.  

\section{Unitary channels}
\label{Section:discussions}

In this section, we consider two other situations that allow us to run the HHL algorithm in the worst case with a good average-case-correct QFT. We will mainly focus on unitary channels for the reasons explained below.

In the first case, we only use $C\in S_1(\eta)$ with an extra assumption that we have the ability to use $C^{-1}$. Here we need to assume that $C$ is an invertible quantum channel whose inverse is also a quantum channel, such as a unitary channel. A complete characterization of such quantum channels in the general case is given in \cite[Theorem 2.1]{nayak2007invertible}. In the special case that $C$ is a quantum channel from $L(\mathbb{C}^p)$ to $L(\mathbb{C}^p)$ whose inverse is also a quantum channel, then $C$ must be a unitary channel. Here $L(\mathbb{C}^p)$ is the space of all linear maps from $\mathbb{C}^p$ to $\mathbb{C}^p$. This is the case for us when studying the HHL algorithm.

In the second case, we assume that we only have $C \in S_1(\eta), P \in T_2(\eta)$ and they are unitary channels. To ensure that the HHL algorithm still works, we will make an extra assumption that the average of the trace of $C\circ P$ is close to 1.

\subsection{What if we can use $C$ and $C^{-1}$ for $C\in S_1(\eta)$?}

As discussed above, in this case, $C$ can only be a unitary channel. 
So the notation can be simplified. Note that
$\F(\ket{\psi}, \ket{\phi})=|\langle \psi|\phi\rangle|^2$
and
$\T(\ket{\psi}, \ket{\phi}) = \sqrt{1-|\langle \psi|\phi\rangle|^2}$, 
so in terms of the trace distance, we have
\beas
S_1(\eta) &=& \Big\{C \text{ is a unitary}: 
\E_k[\T(C\ket{\hat{k}}, \, \ket{k})^2 ] \leq \eta \Big\}, \\
S_2(\eta) &=& \Big\{C \text{ is a unitary}: 
\E_k[\T(C\ket{k}, \, \ket{\widehat{-k}})^2 ] \leq \eta \Big\}, \\
S_3(\eta) &=& \Big\{C \text{ is a unitary}: |\E_{k} \bra{k} C\ket{\hat{k}}|^2 \geq 1-\eta \Big\}.
\eeas
In the above, we used trace distance because it has the triangle inequality. This simplifies the analysis and, more importantly, makes the final results better.

As before, we first focus on the ideal case that all the eigenvalues are specified by $n$ bits exactly. Since now $C$ is unitary, there is no need to use the notation of density operators. Namely, we shall write $\ket{\psi_{7,l}}$ rather than $\rho_{7,l}$ as the output. The middle states will be denoted as $\ket{\psi_{1,l}}, \ldots, \ket{\psi_{6,l}}$. Recall that $\ket{\phi_{1,l}}, \ldots, \ket{\phi_{7,l}}$ are the states in the HHL algorithm when the QFT and its inverse have no errors.

\begin{thm}
\label{thm1:case with C inverse}
Using the same notation as Theorem \ref{main theorem}. Assume that $C\in S_1(\eta)$, and we can use $C^{-1}$ exactly. Furthermore, assume that the eigenvalues of $A$ are represented exactly by $n$ bits.
Then
\be
\E_l[\T(  \ket{\phi_{7,l}}, \ket{\psi_{7,l}} )^2]
\leq 
2\sqrt{2\eta}.
\ee
\end{thm}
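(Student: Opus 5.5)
The approach is to exploit the cancellation built into $C^{-1} R_l C$: an unknown diagonal phase introduced by $C$ is undone by $C^{-1}$, because it commutes with the controlled rotation. This is exactly the phase ambiguity that made $S_1(\eta)$ insufficient in the earlier sections, and here it disappears. Everything is unitary, so I will work with Euclidean distances of state vectors. At the end I use $\T(\ket{a},\ket{b})=\sqrt{1-|\langle a|b\rangle|^2}\le \|\ket{a}-\ket{b}\|$ and $\T\le 1$, which give $\T^2\le \T\le \|\ket{a}-\ket{b}\|$. Steps 1, 2, 6, 7 are perfect unitaries, so
\[
\T(\ket{\phi_{7,l}},\ket{\psi_{7,l}})\le \big\|(C^{-1}\otimes I)R_l(C\otimes I)\ket{\phi_{2,l}}-(F_N\otimes I)R_l(F_N^{-1}\otimes I)\ket{\phi_{2,l}}\big\|,
\]
where $R_l=\sum_p \ket{p}\bra{p}\otimes V_{p,l}$ is the controlled rotation of step 4, block diagonal in the first register.

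\textbf{Step 1 (phase correction).} For each $l\in[N]$ choose $\theta_l$ with $\bra{l}C\ket{\hat l}=e^{i\theta_l}|\bra{l}C\ket{\hat l}|$. Set $D=\sum_l e^{i\theta_l}\ket{l}\bra{l}$ and $C'=D^{\dagger}C$. Since $D\otimes I$ commutes with $R_l$, we have $C^{-1}R_lC=C'^{-1}R_lC'$. The same commutation gives $F_NR_lF_N^{-1}=F_ND^{\dagger}R_lDF_N^{-1}$, so the ideal circuit is unchanged if $F_N^{-1}$ is replaced by $D F_N^{-1}$, and the target vector is unaffected. Now $\bra{l}C'\ket{\hat l}=|\bra{l}C\ket{\hat l}|\ge0$. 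Hence
\[
\|C'\ket{\hat l}-\ket{l}\|^2=2-2|\bra{l}C\ket{\hat l}|\le 2\big(1-|\bra{l}C\ket{\hat l}|^2\big)=2\,\T(C\ket{\hat l},\ket{l})^2,
\]
and averaging over $l$ gives $\E_l\|C'\ket{\hat l}-\ket l\|^2\le 2\eta$ by the trace-distance form of $S_1(\eta)$.

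\textbf{Step 2 (two error terms).} Write $\ket{\phi_{3,l}}=(F_N^{-1}\otimes I)\ket{\phi_{2,l}}$ and $\ket{\phi_{4,l}}=R_l\ket{\phi_{3,l}}$. Multiplying inside the norm by the unitary $C'\otimes I$ and using the triangle inequality, the distance above is at most
\[
\|(C'\otimes I)\ket{\phi_{2,l}}-\ket{\phi_{3,l}}\|+\|\ket{\phi_{4,l}}-(C'F_N\otimes I)\ket{\phi_{4,l}}\|.
\]
For the first term I use $R_l$ unitary to drop it. Both terms then have the form $\|\sum_k\beta_k(C'\ket{\widehat{p_k+l}}-\ket{p_k+l})\ket{v_k}\|$, where the $\ket{v_k}$ are orthonormal: they are $\ket{\u_k}$, or $\ket{\u_k}$ tensored with the rotation ancilla state. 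Orthogonality of the $\ket{v_k}$ makes each squared norm equal to $\sum_k|\beta_k|^2\|C'\ket{\widehat{p_k+l}}-\ket{p_k+l}\|^2$.

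\textbf{Step 3 (averaging).} Apply Jensen ($\E\sqrt{X}\le\sqrt{\E X}$). Then use that $l\mapsto p_k+l$ is a bijection of $[N]$, so $\E_l\|C'\ket{\widehat{p_k+l}}-\ket{p_k+l}\|^2=\E_l\|C'\ket{\hat l}-\ket l\|^2\le 2\eta$. Each term therefore averages to at most $\sqrt{2\eta}$, which gives $\E_l[\T^2]\le\E_l[\T]\le 2\sqrt{2\eta}$.

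I expect the main obstacle to be Step 1. The naive comparison of $C$ with $F_N^{-1}$ fails because $S_1$ does not control phases, so the proof hinges on checking carefully that the per-$l$ phase matrix $D$ commutes with $R_l$. It does, because the rotation is controlled on the computational-basis register that $D$ acts on diagonally.
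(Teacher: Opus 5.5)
Your proof is correct, but it takes a different route from the paper.

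\textbf{The paper's route.} It works in coordinates $C\ket{\hat k}=\sum_t C_{k,t}\ket t$ and expands the step-5 state. It keeps the component $\ket{\psi'_{5,l}}=\sum_k\beta_k|C_{m_k,m_k}|^2\ket{\widehat{m_k}}\ket{\u_k}\ket{a_{m_k}}$, where $m_k=p_k+l$ and $\ket{a_t}$ is the ancilla state the rotation attaches to register value $t$. In this component the unknown phase of the diagonal entry cancels against its conjugate coming from $C^{-1}=C^\dagger$. The paper shows this component has average overlap at least $1-\eta$ with $\ket{\phi_{5,l}}$. It bounds the remainder $\ket{G_l}$ by Pythagoras, asserting $\ket{G_l}\perp\ket{\psi'_{5,l}}$.

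\textbf{Your route.} You make the same cancellation structural. The diagonal phase $D$ commutes with the controlled rotation, so you may assume $\bra m C\ket{\hat m}\ge 0$. Then $S_1(\eta)$ directly bounds the Euclidean error $\E_m\|C'\ket{\hat m}-\ket m\|^2\le 2\eta$, and a two-hybrid triangle inequality finishes.

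\textbf{What your route buys.} First, it never needs orthogonality, and that is a real gain. The leakage amplitudes $C_{m,t}$ with $t\neq m$ are partly returned to $\ket{\hat m}$ by $C^{-1}$. A direct computation gives
$\langle\psi'_{5,l}|G_l\rangle=\sum_k|\beta_k|^2|C_{m_k,m_k}|^2\sum_{t\neq m_k}|C_{m_k,t}|^2\langle a_{m_k}|a_t\rangle$.
This is nonzero in general; for constant $f$ it equals $\sum_k|\beta_k|^2|C_{m_k,m_k}|^2(1-|C_{m_k,m_k}|^2)$. So the paper's step $\E_l\|\ket{G_l}\|^2=\E_l[1-\|\ket{\psi'_{5,l}}\|^2]$ needs extra justification.

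Second, your route is quantitatively stronger. Your two error terms are equal, so $\|\ket{\psi_{7,l}}-\ket{\phi_{7,l}}\|\le 2a_l$ with $\E_l a_l^2\le 2\eta$. Squaring, instead of using $\T^2\le\T$, gives $\E_l[\T^2]\le 8\eta$. Since $\T^2\le1$, this implies the stated $2\sqrt{2\eta}$ for every $\eta\in[0,1]$.

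The paper's coordinate bookkeeping is the template it reuses for inexact eigenvalues (Theorem~\ref{thm2:case with C inverse}). Your hybrid argument should adapt there as well, by splitting off the part of each $\ket{\psi_{kl}}$ outside $G_k$.

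\textbf{One presentational fix.} In Step 1 the subscript of $\theta_l$ is a basis label, not the random shift, and $D$ is a single fixed diagonal unitary. Your phrase ``per-$l$ phase matrix'' suggests otherwise. The bijection step in Step 3 relies on $D$ being fixed, so rename the index.
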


\begin{proof}
We assume that $C \ket{\hat{k}} = \sum_l C_{k,l} \ket{l}$. In the HHL algorithm, if we replace $F_N^{-1}$ with $C$ and $F_N$ with $C^{-1}$, then at step 5, we obtain
$
\ket{\psi_{5,l}} = \ket{\psi_{5,l}'} +  \ket{G_l},
$
where 
\[
\ket{\psi_{5,l}'} 
= \sum_{k\in[d]} \beta_k |C_{p_k+l,p_k+l}|^2  \, \ket{p_k+l} \ket{\u_k}  \otimes \left(f(\sigma_k) \ket{0} + \sqrt{1-f(\sigma_k)^2} \ket{1}\right),
\]
and $\ket{G_l}$ is a garbage state that is orthogonal to $\ket{\psi_{5,l}'}$. So we have
\[
\E_l [|\langle \phi_{5,l}\ket{\psi_{5,l}} |]
\geq 
|\E_l [\langle \phi_{5,l}\ket{\psi_{5,l}'} ]|
-\E_l[\|\ket{G_l}\|].
\]
Regarding the first term, we know that
\[
\E_l [\langle \phi_{5,l}\ket{\psi_{5,l}'} ]
=\E_l \left[\sum_{k\in[d]} |\beta_k|^2 |C_{p_k+l,p_k+l}|^2 \right] 
\geq 1 - \eta.
\]
Regarding the second term, we know that
\[
\E_l[\|\ket{G_l}\|]^2 
\leq 
\E_l[\|\ket{G_l}\|^2]
= 
\E_l[1-\|\ket{\psi_{5,l}'} \|^2]
= 1 - \sum_{k\in[d]} |\beta_k|^2 \E_l[|C_{p_k+l,p_k+l}|^4]
\leq 1 - (1-\eta)^2.
\]
Therefore,
\[
\E_l [|\langle \phi_{5,l}\ket{\psi_{5,l}} |^2]
\geq
\E_l [|\langle \phi_{5,l}\ket{\psi_{5,l}} |]^2
\geq \left(1 - \eta - \sqrt{1 - (1-\eta)^2}\right)^2
\geq 1 - 2\sqrt{2\eta}.
\]
The last two steps of the HHL algorithm are assumed to be perfect with no error. So we obtain the claimed result.
\end{proof}

The general case can be analyzed similarly.

\begin{thm}
\label{thm2:case with C inverse}
Using the same notation as Theorem \ref{main theorem}. Assume that $C\in S_1(\eta)$, and we can use $C^{-1}$ exactly. 
Then for any integer $K\geq 1$, we have
\be
\E_l[\T(  \ket{\phi_{7,l}}, \ket{\psi_{7,l}} )^2]
\leq \frac{4\sqrt{2}}{\sqrt{K-1}} + 8 K \sqrt{\eta}.
\ee
\end{thm}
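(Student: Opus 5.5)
I will follow the proof of Theorem \ref{thm1:case with C inverse}, replacing the exact Fourier states by the decomposition (\ref{eq for general case}) with the good sets $G_k$ of (\ref{good sets}). Since $C$ is unitary and $C^{-1}$ is used exactly, the whole error is created at step 3 and undone at step 5, up to the controlled rotation $R_l$ in between. Throughout, write $C\ket{\hat g}=\sum_m C_{g,m}\ket{m}$ as before, and set $\delta=2/(K-1)$. By \cite[Proposition 2]{linden2021average}, $\sum_{g\notin G_k}|\alpha_{kg}|^2\le\delta$ for every $k$.

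\emph{Step 1 (truncation).} Let $\ket{\phi'_{2,l}}=\sum_k\beta_k\sum_{g\in G_k}\alpha_{kg}\ket{\widehat{g+l}}\ket{\u_k}$, and define the correspondingly truncated ideal states $\ket{\phi'_{j,l}}$ for $j=3,4,5$. Each truncation moves the state by at most $\sqrt{\delta}$ in Euclidean norm, uniformly in $l$. Since all maps are unitary, the trace distance obeys the triangle inequality, so I will first handle the truncated input and add $O(\sqrt{\delta})$ at the end.

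\emph{Step 2 (error after $C$, rotation, $C^{-1}$).} Apply $C\otimes I$ to $\ket{\phi'_{2,l}}$. The component $\ket{g+l}$ appears with amplitude $C_{g+l,g+l}$, and the rest lies in other computational states. On $\ket{g+l}\ket{\u_k}$ the rotation $R_l$ acts by the rotation attached to $\sigma_k$, and $\ket{g+l}$ still indexes an eigenvalue close to $\sigma_k$. I will compute the overlap of $(C^{-1}\otimes I)R_l(C\otimes I)\ket{\phi'_{2,l}}$ with $\ket{\phi'_{5,l}}$ as in the earlier proof: project onto the ``diagonal'' part where $C$ sends $\ket{\widehat{g+l}}$ to $\ket{g+l}$ and $C^{-1}$ brings it back. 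This gives a main term $\sum_k|\beta_k|^2\sum_{g\in G_k}|\alpha_{kg}|^2|C_{g+l,g+l}|^2$ plus cross terms. The cross terms come from off-diagonal amplitudes $C_{g+l,m}$ with $m\neq g+l$, which may collide with another good index $g'\in G_k$. There are at most $|G_k|=2K$ such collisions per $g$.

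\emph{Step 3 (averaging over $l$).} Since $\E_l|C_{g+l,g+l}|^2\ge1-\eta$ for every fixed $g$, the main term has average at least $(1-\delta)(1-\eta)$. For the cross terms I will use Cauchy--Schwarz together with Proposition \ref{prop:nondiagonal entries of S3}, which gives $\E_l|C_{g+l,g'+l}|^2\le\eta$ for $g\neq g'$. This bounds their average magnitude by about $2K\sqrt{\eta}$ per unit weight. The leakage into indices outside the good set is bounded by $\sqrt{1-(1-\eta)^2}\le\sqrt{2\eta}$ as before. Collecting terms, I expect $\E_l|\langle\phi_{5,l}|\psi_{5,l}\rangle|\ge 1-c_1\sqrt{\delta}-c_2K\sqrt{\eta}$. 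Squaring, using Jensen, and using $\T^2=1-|\langle\cdot|\cdot\rangle|^2$ gives a bound of the form $4\sqrt{2}/\sqrt{K-1}+8K\sqrt{\eta}$, because $\sqrt{\delta}=\sqrt2/\sqrt{K-1}$. Steps 6 and 7 are perfect and do not change the distance.

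The main obstacle is Step 2. The rotation $R_l$ does not commute with the off-diagonal action of $C$, so the garbage after $C^{-1}$ is not simply orthogonal to the target as in the exact case. My plan is to write $R_l=\sum_m\ket m\bra m\otimes R_{m}$ and split $C\ket{\widehat{g+l}}$ into its $\ket{g+l}$ part and the remainder. The remainder term is then bounded in norm, and only its overlap with the other good indices is charged, which is where the factor $K$ comes from. Carefully tracking these constants is what should produce the stated values $4\sqrt2$ and $8K$.
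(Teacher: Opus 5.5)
Your plan is the paper's proof. You truncate each $\ket{\psi_{kl}}$ to the window $G_k$ at cost $\sqrt{\delta}$ with $\delta=2/(K-1)$, and split $C\ket{\widehat{g+l}}=C_{g+l,g+l}\ket{g+l}+\ket{r_{g,l}}$. The main term is $\E_l\sum_k|\beta_k|^2\sum_{g\in G_k}|\alpha_{kg}|^2|C_{g+l,g+l}|^2\ge(1-\delta)(1-\eta)$. Cross terms are controlled by $\E_l|C_{g+l,g'+l}|^2\le\eta$ for $g\neq g'$ together with $\sum_{g\neq g'\in G_k}|\alpha_{kg}\alpha_{kg'}|\le 2K$. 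You finish with Jensen and $\T^2=1-|\langle\cdot|\cdot\rangle|^2$.

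A small correction to Step 2: on $\ket{g+l}$ the controlled rotation is $R(g)$, i.e.\ it uses $f(g/N)$, not a rotation ``attached to $\sigma_k$''. Nothing like $g/N\approx\sigma_k$, or continuity of $f$, is needed, because the ideal run applies the same $R(g)$. The returned off-diagonal pieces carry $R(t-l)$, and they enter only through $|\bra{0}R(g')^\dagger R(t-l)\ket{0}|\le 1$.

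The step that does not carry over is the leakage bound ``$\sqrt{1-(1-\eta)^2}\le\sqrt{2\eta}$ as before''. In Theorem~\ref{thm1:case with C inverse} that bound came from the Pythagoras identity $\E_l\|\ket{G_l}\|^2=1-\E_l\|\ket{\psi'_{5,l}}\|^2$, with a single Fourier index per eigenvector. Here the $2K$ remainders $\ket{r_{g,l}}$, $g\in G_k$, are superposed coherently with weights $\alpha_{kg}$, and they are not mutually orthogonal: $\langle r_{g,l}|r_{g',l}\rangle=-\overline{C_{g+l,g+l}}\,C_{g'+l,g+l}-C_{g'+l,g'+l}\,\overline{C_{g+l,g'+l}}$. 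The triangle inequality and Cauchy--Schwarz over $G_k$ only give $\E_l\|\cdot\|\le\sqrt{2K\eta}$, not $\sqrt{2\eta}$.

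The paper never bounds this piece by its norm. It computes the overlap with the target directly: $\sum_{t\neq g+l}C_{g+l,t}\overline{C_{g'+l,t}}\,\bra{0}R(g')^\dagger R(t-l)\ket{0}$. This is quadratic in the off-diagonal entries except at $t=g'+l$, and it gives $(1-\delta)\eta$ for $g=g'$ and $2K(\sqrt{\eta}+\eta)$ for $g\neq g'$. Add $2K\sqrt{\eta}$ for the components of $C^{-1}\ket{g+l}$ on $\ket{\widehat{g'+l}}$, and $2\sqrt{\delta}+\delta$ for truncation. This yields $\E_l|\langle\phi_{5,l}|\psi_{5,l}\rangle|\ge 1-a$ with $a=2\sqrt{\delta}+2\delta+4K\sqrt{\eta}+(2+2K)\eta-2\delta\eta$. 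Since $2a-a^2\le 4\sqrt{\delta}+8K\sqrt{\eta}$, this is exactly the stated bound.

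Your diagonal-plus-remainder decomposition also works if you bound the \emph{entire} remainder by $\sqrt{2K\eta}\le 2K\sqrt{\eta}$. In that case do not charge its collisions with the good indices separately. Stacking a separate leakage term on top of two $2K\sqrt{\eta}$ cross terms would overshoot the coefficient $8K$.
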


\begin{proof}
We shall use the notation in the proof of Theorem \ref{main theorem:general case}. We assume that $C \ket{\hat{k}} = \sum_l C_{k,l} \ket{l}$. When we apply $C\otimes I$ to $\ket{\phi_{2,l}}$, we obtain the state
\[
\ket{\psi_{3,l}}
= \sum_{k\in[d]}  \beta_k \sum_{g\in G_k} \alpha_{kg} \left(C_{g+l,g+l} \ket{g+l}
+\sum_{t:t\neq g+l} C_{g+l,t} \ket{t} \right) \ket{\u_k} + Err,
\]
where $|Err| \leq \sqrt{2/(K-1)}$ by \cite[Proposition 2]{linden2021average}. Next, we will apply control rotations. For convenience, for any $t\in[N]$ we denote the control rotation as $R(t)\ket{0}:=f(t/N) \ket{0} + \sqrt{1-f(t/N)^2} \ket{1}.$ Then we obtain 
\[
\ket{\psi_{4,l}}
= \sum_{k\in[d]}  \beta_k \sum_{g\in G_k} \alpha_{kg} \left(C_{g+l,g+l} \ket{g+l} \otimes R(g) \ket{0}
+\sum_{t:t\neq g+l} C_{g+l,t} \ket{t}  \otimes R(t-l) \ket{0} \right) \ket{\u_k} + Err.
\]
Finally, we apply $C^{-1}$ to the first register to obtain
\beas
\ket{\psi_{5,l}}
&=& \sum_{k\in[d]}  \beta_k \sum_{g\in G_k} \alpha_{kg} C_{g+l,g+l} \left(\bar{C}_{g+l,g+l}\ket{\widehat{g+l}}+ \sum_{t\neq g+l} \bar{C}_{t,g+l} \ket{\,\hat{t}\,} \right) \otimes R(g) \ket{0} \otimes \ket{\u_k} \\
&& +\, \sum_{k\in[d]}  \beta_k \sum_{g\in G_k} \alpha_{kg} \sum_{t:t\neq g+l} C_{g+l,t} C^{-1}\ket{t}  \otimes R(t-l) \ket{0} \otimes \ket{\u_k} + Err.
\eeas

In the following, we estimate the error between $\ket{\psi_{5,l}}$ and $\ket{\phi_{5,l}}$, the exact result obtained in the HHL algorithm. Recall that
\[
\ket{\phi_{5,l}} = \sum_{k\in[d]} \beta_k \sum_{g\in G_k} \alpha_{kg} \ket{\widehat{g+l}} \otimes R(g) \ket{0} \otimes \ket{\u_k} + Err
=:\ket{\phi_{5,l,1}} + Err.
\]
We denote $\ket{\psi_{5,l}}=\ket{\psi_{5,l,1}} + \ket{\psi_{5,l,2}} + \ket{\psi_{5,l,3}} + Err$ according to the three main summations. Note that $\sum_{g\in G_k}|\alpha_{kg}|^2\geq 1-2/(K-1)$, $|G_k|=2K$ and $C\in S_1(\eta)$, we have
\[
\E_l \langle \phi_{5,l,1}| \psi_{5,l,1} \rangle
= \E_l \left[\sum_{k\in[d]} |\beta_k|^2 \sum_{g\in G_k} |\alpha_{kg}|^2 |C_{g+l,g+l}|^2\right] \geq (1-\frac{2}{K-1}) (1-\eta).
\]
About the second term, we have
\beas
\E_l\langle \phi_{5,l,1}| \psi_{5,l,2} \rangle 
&=& \E_l \left[
\sum_{k\in[d]} |\beta_k|^2 \sum_{g\neq g'\in G_k} \alpha_{kg} \bar{\alpha}_{kg'} C_{g+l,g+l} \bar{C}_{g'+l,g+l} \bra{0} R(g')^\dag R(g) \ket{0}
\right] \\
&\leq&
\sum_{k\in[d]} |\beta_k|^2 \sum_{g\neq g'\in G_k} |\alpha_{kg} \bar{\alpha}_{kg'}| \, \E_l  [|\bar{C}_{g'+l,g+l}| ] \\
&\leq& 
\sum_{k\in[d]} |\beta_k|^2 \sum_{g\neq g'\in G_k} |\alpha_{kg} \bar{\alpha}_{kg'}| \, 
\sqrt{1 - \E_l  [|\bar{C}_{g+l,g+l}|^2 ] } \\
&\leq& 
2K \sqrt{\eta } .
\eeas
Finally,
\beas
\E_l\langle \phi_{5,l,1}| \psi_{5,l,3} \rangle &=& \E_l \left[\sum_{k\in[d]} |\beta_k|^2
\sum_{g,g'\in G_k} \sum_{t:t\neq g+l} \alpha_{kg} \alpha_{kg'} C_{g+l,t} \bar{C}_{g'+l,t}  \bra{0} R(g')^\dag R(t-l) \ket{0}\right] \\
&\leq& \sum_{k\in[d]} |\beta_k|^2
\sum_{g,g'\in G_k} |\alpha_{kg} \alpha_{kg'}| \, \E_l \left[ \sum_{t:t\neq g+l} |C_{g+l,t} \bar{C}_{g'+l,t}| \right] .
\eeas
In the above summation, if $g=g'$, then 
\[
\sum_{g\in G_k} |\alpha_{kg}|^2 \, \E_l \left[ \sum_{t:t\neq g+l} |C_{g+l,t}|^2 \right] =
\sum_{g\in G_k} |\alpha_{kg}|^2 \, \E_l [ 1 - |C_{g+l,g+l}|^2 ] \leq (1 - \frac{2}{K-1}) \eta.
\]
If $g\neq g'$, then
\beas
&& \sum_{g\neq g'\in G_k} |\alpha_{kg} \alpha_{kg'}| \, \E_l \left[ \sum_{t:t\neq g+l} |C_{g+l,t} \bar{C}_{g'+l,t}| \right] \\
&=& 
\sum_{g\neq g'\in G_k} |\alpha_{kg} \alpha_{kg'}| \, \E_l \left[ |C_{g+l,g'+l} \bar{C}_{g'+l,g'+l}| \right]
+ \sum_{g\neq g'\in G_k} |\alpha_{kg} \alpha_{kg'}| \, \E_l \left[ \sum_{t:t\neq g+l,g'+l} |C_{g+l,t} \bar{C}_{g'+l,t}| \right] \\
&\leq& 
\sum_{g\neq g'\in G_k} |\alpha_{kg} \alpha_{kg'}| \, \E_l \left[ |C_{g+l,g'+l}| \right]
+ \frac{1}{2}\sum_{g\neq g'\in G_k} |\alpha_{kg} \alpha_{kg'}| \, \E_l \left[ \sum_{t:t\neq g+l,g'+l} |C_{g+l,t}|^2+ |\bar{C}_{g'+l,t}|^2 \right] \\
&\leq& 
|G_k|\sqrt{\eta} + |G_k| \eta
=2K(\sqrt{\eta} + \eta).
\eeas
Hence
\[
\E_l\langle \phi_{5,l}| \psi_{5,l,3} \rangle \leq 2K(\sqrt{\eta} + \eta) + (1-\frac{2}{K-1}) \eta.
\]
Combining the above estimations and recalling that $|Err| \leq \sqrt{2/(K-1)}$, we have
\beas
\E_l[\T(  \ket{\phi_{7,l}}, \ket{\psi_{7,l}} )^2] 
&=& \E_l[\T(  \ket{\phi_{5,l}}, \ket{\psi_{5,l}} )^2] \\
&=& 1-\E_l[|\langle \phi_{5,l}| \psi_{5,l} \rangle|^2] \\
&\leq& 1- \left( (1-\frac{2}{K-1})(1-2\eta) -4K\sqrt{\eta}- 2K\eta  - \frac{2\sqrt{2}}{\sqrt{K-1}} - \frac{2}{K-1} \right)^2.
\eeas
Simplifying this leads to the claimed result.
\end{proof}

\subsection{What if we only have $S_1(\eta)$ and $T_2(\eta)$?}

Usually, $S_1(\eta)$ and $T_2(\eta)$ are not sufficient for us to make sure that the HHL algorithm still works. This can be seen easily by considering unitary channels and $\eta=0$. Now $C=\Phi F_N^{-1}, P=F_N \Psi$ for some diagonal unitaries $\Phi, \Psi$. Suppose that all eigenvalues are represented by $n$ bits; we have
$
\sum_{k\in[d]} \beta_k \ket{\widehat{p_k+l}} \ket{\u_k}
$
at the second step of the HHL algorithm (see Subsection \ref{subsection:Outline of HHL}).
If we apply $C$ to the first register, we obtain
$
\sum_{k\in[d]} \beta_k e^{i\Phi_k} \ket{p_k+l} \ket{\u_k}.
$
We then do control rotations and apply $P$ to the first register; the resulting state is
\[
\sum_{k\in[d]} \beta_k e^{i(\Phi_k+\Psi_k)} \ket{p_k+l} \ket{\u_k} \otimes \left(f(\sigma_k) \ket{0} + \sqrt{1-f(\sigma_k)^2} \ket{1}\right).
\]
After the last two steps of the HHL algorithm listed in Subsection \ref{subsection:Outline of HHL}, we finally obtain
\[
\sum_{k\in[d]} \beta_k e^{i(\Phi_k+\Psi_k)} f(\sigma_k)  \ket{\u_k} \otimes  \ket{0} + \ket{0}^\bot.
\]
As we can see, this does not include $f(A)\ket{\b} = \sum_{k\in[d]} \beta_k f(\sigma_k)  \ket{\u_k}$ exactly  because of the unknown phases $e^{i(\Phi_k+\Psi_k)}$. 
One natural extra assumption we can make to fix this is
$|\E_k[\bra{k} C\circ P \ket{k}]| \geq 1-\eta$. If this happens, then $\Phi = \Psi^{-1}$ in the above situation when $\eta=0$. The following result confirms this for unitary channels.

\begin{lem}
\label{lem:case with C, P and more}
Assume that $C\in S_1(\eta_1), P \in T_2(\eta_2)$ are unitary channels. If
$|\E_k[\bra{k} C\circ P \ket{k}]| \geq 1-\eta_3,$ then
\[
|\E_k [\bra{\hat{k}} P \ket{k} \bra{k} C \ket{\hat{k}}] | \geq 1 - (\eta_1+\eta_2+\eta_3).
\]
\end{lem}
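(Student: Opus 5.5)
The plan is to expand the trace condition in the Fourier basis and treat the terms with matching indices separately from the rest. Write $c_k := \bra{k} C \ket{\hat{k}}$ and $p_k := \bra{\hat{k}} P \ket{k}$, so the quantity to be bounded is $|\E_k[p_k c_k]|$.

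Since $C$ and $P$ are unitary, $C\in S_1(\eta_1)$ reads $\E_k|c_k|^2 \geq 1-\eta_1$, and $P\in T_2(\eta_2)$ reads $\E_k|p_k|^2\geq 1-\eta_2$. Inserting the resolution of identity $\sum_m \ket{\hat m}\bra{\hat m}=I$ between $C$ and $P$ gives
\[
\E_k \bra{k} C P \ket{k} = \E_k[c_k p_k] + \E_k \sum_{m\neq k} \bra{k} C\ket{\hat m}\bra{\hat m} P\ket{k}.
\]
By the triangle inequality, $|\E_k[c_kp_k]| \geq 1-\eta_3 - |\text{off-diagonal term}|$. So everything reduces to showing that the off-diagonal term is at most $\eta_1+\eta_2$.

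To bound the off-diagonal term, apply Cauchy--Schwarz for each fixed $k$ over $m\neq k$:
\[
\Big|\sum_{m\neq k} \bra{k} C\ket{\hat m}\bra{\hat m} P\ket{k}\Big| \leq \Big(\sum_{m\neq k}|\bra{k}C\ket{\hat m}|^2\Big)^{1/2}\Big(\sum_{m\neq k}|\bra{\hat m}P\ket{k}|^2\Big)^{1/2}.
\]
Unitarity now does the work. The vector $(\bra{k}C\ket{\hat m})_m$ is the $k$-th row of the unitary $CF_N$, so it has unit norm and the first factor equals $\sqrt{1-|c_k|^2}$. The vector $(\bra{\hat m}P\ket{k})_m$ is the $k$-th column of $F_N^{-1}P$, so the second factor equals $\sqrt{1-|p_k|^2}$. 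Applying Cauchy--Schwarz again, this time in the expectation over $k$, bounds the off-diagonal term by
\[
\sqrt{\E_k(1-|c_k|^2)}\,\sqrt{\E_k(1-|p_k|^2)} \leq \sqrt{\eta_1\eta_2} \leq \tfrac{1}{2}(\eta_1+\eta_2).
\]
Combining this with the triangle-inequality step gives $|\E_k[p_kc_k]| \geq 1-\eta_3-\sqrt{\eta_1\eta_2}$, which is in fact stronger than the claimed bound $1-(\eta_1+\eta_2+\eta_3)$.

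The main obstacle is the control of the cross term. One has to notice that the row and column norms of $CF_N$ and $F_N^{-1}P$ equal one, because this is exactly what turns the average-case conditions $S_1$ and $T_2$ into $\ell^2$ bounds on the off-diagonal mass. Unitarity is used here in an essential way; for general channels the same Kraus-sum version would need an additional argument.
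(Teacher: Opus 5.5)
Your proof is correct and is essentially the paper's argument. Both split $\E_k\bra{k}CP\ket{k}$ into the diagonal term $\E_k[c_kp_k]$ and the off-diagonal sum in the Fourier basis, and both use unitarity to turn the off-diagonal mass into $1-|c_k|^2$ and $1-|p_k|^2$. The only difference is that you bound the cross term by Cauchy--Schwarz, obtaining $\sqrt{\eta_1\eta_2}$, whereas the paper uses $|ab|\le\tfrac12(|a|^2+|b|^2)$, obtaining $\tfrac12(\eta_1+\eta_2)$, which it then loosens to $\eta_1+\eta_2$; either bound suffices.
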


\begin{proof}
Assume that $P \ket{k} = \sum_l P_{kl} \ket{\hat{l}} $ and $ C\ket{\hat{k}} = \sum_l C_{kl} \ket{l}$. Then
\[
\E_k[\bra{k} C\circ P \ket{k}]
=\E_k [P_{kk} C_{kk}] + \E_k \left[\sum_{l:l\neq k} P_{kl} C_{lk}\right].
\]
Regarding the second term, since $C\in S_1(\eta_1), P \in T_2(\eta_2)$, we have
\beas
 \E_k \left[\sum_{l:l\neq k} P_{kl} C_{lk}\right] \leq 
 \frac{1}{2} \E_k \left[\sum_{l:l\neq k} |P_{kl}|^2 + \sum_{l:l\neq k} |C_{lk}|^2\right] 
 = 
 \frac{1}{2} \E_k \left[ 2 - |P_{kk}|^2 - |C_{kk}|^2\right] 
 \leq \eta_1+\eta_2 .
\eeas
Thus
\[
1-\eta_3 \leq |\E_k[\bra{k} C\circ P \ket{k}]|
\leq |\E_k [P_{kk} C_{kk}]| + \eta_1+\eta_2.
\]
This leads to the claimed result.
\end{proof}


With the above preliminaries, in the general case, the extra assumption we can make is $\E_{k,l}\bra{k} \, C\circ P (\ket{k} \bra{l}) \, \ket{l} \geq 1-\eta_3.$ Note that this would imply $C\in S_3(\eta)$ when $P=F_N$. For quantum channels, it may not be easy to find a lightweight protocol to verify this assumption.
Because of this, we now assume that $C, P$ are unitary channels. In this case, the assumption $|\E_k[\bra{k} C\circ P \ket{k}]| \geq 1-\eta_3$ made in Lemma \ref{lem:case with C, P and more} is easy to verify. The algorithm reads as follows: We randomly choose $k$, then run $C\circ P$ on it and measure in the computational basis. Output 1 if the outcome is $k$, and output 0 otherwise.

Using a similar argument to the proof of Theorems \ref{thm1:case with C inverse} and \ref{thm2:case with C inverse}, and combining Lemma \ref{lem:case with C, P and more}, we obtain the following results.

\begin{thm}
\label{thm1:case with C, P and more}
Using the same notation as Theorem \ref{main theorem}. Assume that $C\in S_1(\eta_1), P \in T_2(\eta_2)$ and $|\E_k[\bra{k} C\circ P \ket{k}]| \geq 1-\eta_3$. Further, assume that the eigenvalues of $A$ are represented by $n$ bits precisely.
Then
\be
\E_l[\T(  \ket{\phi_{7,l}}, \ket{\psi_{7,l}} )^2]
\leq 
2\sqrt{2(\eta_1+\eta_2+\eta_3)}.
\ee
\end{thm}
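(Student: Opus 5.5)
Since $C$ and $P$ are unitary, I will track pure states and follow the proof of Theorem \ref{thm1:case with C inverse}. The only change is that $C^{-1}$ is replaced by $P$, and Lemma \ref{lem:case with C, P and more} supplies the needed phase coherence.

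\textbf{Setup.} Write $C\ket{\hat{k}}=\sum_{t} C_{k,t}\ket{t}$ and $P\ket{k}=\sum_{t} P_{k,t}\ket{\hat{t}}$, as in that lemma. At step 2 the state is $\ket{\phi_{2,l}}=\sum_k\beta_k\ket{\widehat{p_k+l}}\ket{\u_k}$.
\begin{itemize}
\item After $C$, the state is $\sum_k\beta_k\sum_t C_{p_k+l,t}\ket{t}\ket{\u_k}$.
\item After the controlled rotation $R_l$, each $\ket{t}$ is tagged with $R(t-l)\ket{0}$.
\item After $P$, each $\ket{t}$ becomes $\sum_s P_{t,s}\ket{\hat s}$.
\end{itemize}
I will decompose $\ket{\psi_{5,l}}=\ket{\psi'_{5,l}}+\ket{G_l}$. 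Here
\[
\ket{\psi'_{5,l}}=\sum_k\beta_k\,C_{p_k+l,p_k+l}P_{p_k+l,p_k+l}\ket{\widehat{p_k+l}}\ket{\u_k}\otimes R(p_k)\ket{0}
\]
is the component proportional to $\ket{\phi_{5,l}}$ term by term. In other words, $\ket{\psi'_{5,l}}$ is the projection of $\ket{\psi_{5,l}}$ onto $\mathrm{span}\{\ket{\widehat{p_k+l}}\ket{\u_k}\otimes R(p_k)\ket 0\}$. This makes $\ket{G_l}$ orthogonal to $\ket{\psi'_{5,l}}$, and also orthogonal to $\ket{\phi_{5,l}}$.

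\textbf{Overlap estimate.} Then
\[
\E_l|\langle\phi_{5,l}|\psi_{5,l}\rangle|\ge\Big|\E_l\sum_k|\beta_k|^2 C_{p_k+l,p_k+l}P_{p_k+l,p_k+l}\Big| .
\]
The shift by $l$ makes each inner average equal to $\E_m[C_{m,m}P_{m,m}]$. By Lemma \ref{lem:case with C, P and more}, this is at least $1-\eta$ in modulus, where $\eta:=\eta_1+\eta_2+\eta_3$. Since $\sum_k|\beta_k|^2=1$, the left side is at least $1-\eta$. Orthogonality of $\ket{G_l}$ to $\ket{\phi_{5,l}}$ means no separate garbage subtraction is needed.

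\textbf{Robust fallback.} If I cannot cleanly establish that orthogonality (for instance, for coinciding $p_k$ the span has to be grouped by eigenvalue), I will instead bound $\E_l\|\ket{G_l}\|$ as in Theorem \ref{thm1:case with C inverse}. Note $\|\ket{G_l}\|^2=1-\|\ket{\psi'_{5,l}}\|^2$. Jensen together with the modulus bound gives $\E\|\psi'\|^2\ge\E|CP|^2\ge|\E\, CP|^2\ge(1-\eta)^2$. Hence $\E_l\|\ket{G_l}\|\le\sqrt{1-(1-\eta)^2}\le\sqrt{2\eta}$, and so
\[
\E_l|\langle\phi_{5,l}|\psi_{5,l}\rangle|\ge 1-\eta-\sqrt{2\eta}.
\]

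\textbf{Conclusion.} Squaring via Jensen gives $\E_l|\langle\phi_{5,l}|\psi_{5,l}\rangle|^2\ge(1-\eta-\sqrt{2\eta})^2\ge1-2\sqrt{2\eta}$, as in Theorem \ref{thm1:case with C inverse}. Steps 6 and 7 are perfect unitaries, so the overlap is preserved. Since $\T^2=1-|\langle\cdot|\cdot\rangle|^2$ for pure states, this yields the claimed bound $2\sqrt{2(\eta_1+\eta_2+\eta_3)}$.

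\textbf{Main obstacle.} The delicate step is the decomposition. Cross terms with $t\neq p_k+l$ or $s\neq t$ can land back on $\ket{\widehat{p_{k'}+l}}\ket{\u_k}\otimes R(p_{k'})\ket0$, but only with the same $\ket{\u_k}$. I will define $\ket{\psi'}$ carefully as the diagonal-path contribution and verify that the garbage norm is controlled as above. Using the fallback bound, which needs only $\|\ket{G_l}\|$, sidesteps exact orthogonality and suffices for the stated constant.
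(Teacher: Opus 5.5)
Your route is the one the paper intends: its proof of this theorem is just ``argue as in Theorems~\ref{thm1:case with C inverse} and~\ref{thm2:case with C inverse}, using Lemma~\ref{lem:case with C, P and more}.'' Your shift-by-$l$ reduction of the diagonal term to $\E_m[C_{m,m}P_{m,m}]$ is also fine. The gap is the orthogonality you rely on in both the main line and the fallback. Put $m=p_k+l$. The paths with $t\neq m$ but $s=m$ contribute $\beta_k C_{m,t}P_{t,m}\ket{\hat m}\ket{\u_k}\otimes R(t-l)\ket0$ to $\ket{\psi_{5,l}}$, and $R(t-l)\ket0$ is in general not orthogonal to $R(p_k)\ket0$. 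Hence your diagonal-path vector $\ket{\psi'_{5,l}}$ is not the projection onto $\mathrm{span}\{\ket{\widehat{p_k+l}}\ket{\u_k}R(p_k)\ket0\}$. Consequently $\ket{G_l}$ is orthogonal neither to $\ket{\phi_{5,l}}$ nor to $\ket{\psi'_{5,l}}$. The culprit is these same-$k$ paths through the ancilla, not terms landing on some other $p_{k'}$.

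For a concrete case, take $f\equiv1$, $C=VF_N^{-1}$, $P=F_NV^\dagger$, with $V$ a non-diagonal unitary. Then $\ket{\psi_{5,l}}=\ket{\phi_{5,l}}$ exactly, while $C_{m,m}P_{m,m}=|V_{m,m}|^2$. So $\langle\phi_{5,l}|G_l\rangle=\sum_k|\beta_k|^2(1-|V_{m,m}|^2)\neq0$, and $\|G_l\|^2=\sum_k|\beta_k|^2(1-|V_{m,m}|^2)^2\neq1-\|\psi'_{5,l}\|^2$ in general. Thus neither ``no separate garbage subtraction is needed'' nor the identity $\|G_l\|^2=1-\|\psi'_{5,l}\|^2$ behind your fallback is valid. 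Deferring to Theorem~\ref{thm1:case with C inverse} does not help: its written proof asserts the same orthogonality, and this example with $P=C^{-1}$ is exactly its setting.

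The repair is short. Do not split off a garbage vector; compute the overlap exactly,
\[
\langle\phi_{5,l}|\psi_{5,l}\rangle=\sum_k|\beta_k|^2\Big(C_{m,m}P_{m,m}+\sum_{t\neq m}C_{m,t}P_{t,m}\,\bra{0}R(p_k)^\dagger R(t-l)\ket{0}\Big),\qquad m=p_k+l .
\]
Bound the cross term with Cauchy--Schwarz and unitarity ($\sum_t|C_{m,t}|^2=\sum_t|P_{t,m}|^2=1$):
\[
\sum_{t\neq m}|C_{m,t}||P_{t,m}|\le\sqrt{(1-|C_{m,m}|^2)(1-|P_{m,m}|^2)} .
\]
Since $C\in S_1(\eta_1)$ and $P\in T_2(\eta_2)$, the average of the right side over $m$ is at most $\sqrt{\eta_1\eta_2}$. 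This is the same estimate used inside the proof of Lemma~\ref{lem:case with C, P and more}. Combining it with your shift argument and that lemma, with $\eta=\eta_1+\eta_2+\eta_3$, gives $\E_l|\langle\phi_{5,l}|\psi_{5,l}\rangle|\ge1-\eta-\sqrt{\eta_1\eta_2}$. Hence $\E_l[\T(\ket{\phi_{7,l}},\ket{\psi_{7,l}})^2]\le2\eta+2\sqrt{\eta_1\eta_2}\le3\eta$. This is at most $2\sqrt{2\eta}$ whenever $\eta\le8/9$, and for larger $\eta$ the claim is trivial because $2\sqrt{2\eta}>1$.
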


\begin{thm}
\label{thm2:case with C, P and more}
Using the same notation as Theorem \ref{main theorem}. Assume that $C\in S_1(\eta_1), P \in T_2(\eta_2)$ and $|\E_k[\bra{k} C\circ P \ket{k}]| \geq 1-\eta_3$.
Then for any integer $K\geq 1$, we have
\be
\E_l[\T(  \ket{\phi_{7,l}}, \ket{\psi_{7,l}} )^2]
\leq 
\frac{4\sqrt{2}}{\sqrt{K-1}} + 8 K \sqrt{\eta_1+\eta_2+\eta_3}.
\ee
\end{thm}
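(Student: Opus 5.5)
The plan is to follow the proof of Theorem \ref{thm2:case with C inverse} line by line, with one change. There the output of step 5 was controlled by the diagonal weights $|C_{g+l,g+l}|^2$, which come from $C$ followed by $C^{-1}$. Here they are replaced by products $P_{g+l,g+l}C_{g+l,g+l}$, where $C\ket{\hat{k}}=\sum_t C_{k,t}\ket{t}$ and $P\ket{k}=\sum_t P_{k,t}\ket{\hat{t}}$. The missing phase alignment that $C^{-1}$ supplied automatically will be supplied by Lemma \ref{lem:case with C, P and more}.

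First, as in the proof of Theorem \ref{main theorem:general case}, I would write the step-2 state with $\ket{\psi_{kl}}$ and split off the part outside the good sets $G_k$. By \cite[Proposition 2]{linden2021average} this costs a vector error $Err$ with $\|Err\|\le\sqrt{2/(K-1)}$. After applying $C$, the controlled rotations $R(\cdot)$ and then $P$, the step-5 state becomes $\ket{\psi_{5,l,1}}+\ket{\psi_{5,l,2}}+\ket{\psi_{5,l,3}}+Err$. Here the main term is
\[
\ket{\psi_{5,l,1}}=\sum_k\beta_k\sum_{g\in G_k}\alpha_{kg}\,C_{g+l,g+l}P_{g+l,g+l}\ket{\widehat{g+l}}\otimes R(g)\ket{0}\otimes\ket{\u_k}.
\]
The second term collects the off-diagonal outputs of $P$ applied to the diagonal outputs of $C$. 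The third collects $P$ applied to the off-diagonal outputs $\ket{t}$, $t\neq g+l$, of $C$. Steps 6 and 7 are unitary, so $\T(\ket{\phi_{7,l}},\ket{\psi_{7,l}})=\T(\ket{\phi_{5,l}},\ket{\psi_{5,l}})$, and it suffices to lower-bound $\E_l|\langle\phi_{5,l}|\psi_{5,l}\rangle|$.

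For the main term,
\[
\E_l\langle\phi_{5,l,1}|\psi_{5,l,1}\rangle=\sum_k|\beta_k|^2\sum_{g\in G_k}|\alpha_{kg}|^2\,\E_l[C_{g+l,g+l}P_{g+l,g+l}].
\]
Since $l$ is uniform on $[N]$, the shift by $g$ disappears and each inner expectation equals $\E_m[C_{mm}P_{mm}]=\E_m[\bra{\hat m}P\ket{m}\bra{m}C\ket{\hat m}]$. Lemma \ref{lem:case with C, P and more} gives this quantity modulus at least $1-(\eta_1+\eta_2+\eta_3)$. It is a single complex number common to all $k,g$, so its phase is a global phase. This is the key point where $\eta_3$ removes the relative-phase problem. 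The main term therefore has modulus at least $(1-\frac{2}{K-1})(1-\eta_1-\eta_2-\eta_3)$.

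The cross terms are bounded exactly as in Theorem \ref{thm2:case with C inverse}, using Cauchy--Schwarz and AM--GM. One uses $\E_l\sum_{t\ne m}|C_{m,t}|^2\le\eta_1$ and $\E_l\sum_{t\ne m}|P_{m,t}|^2\le\eta_2$, both with $m=g+l$, which follow from $C\in S_1(\eta_1)$ and $P\in T_2(\eta_2)$. Each $\sqrt{\eta_i}$ carries the factor $|G_k|=2K$, so the cross terms total $O(K(\sqrt{\eta_1}+\sqrt{\eta_2}+\eta_1+\eta_2))$. All of this is dominated by $O(K\sqrt{\eta_1+\eta_2+\eta_3})$.

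The main obstacle is the third term. Unlike the $C^{-1}$ case, $P$ applied to the garbage $\ket{t}$ does not return to a nice basis. I would handle it by bounding its inner product with $\ket{\phi_{5,l,1}}$ crudely through the norm of $\ket{\psi_{5,l,3}}$, which by $S_1(\eta_1)$ is $O(\sqrt{K\eta_1})$ on average, possibly at the cost of a worse constant. Finally, collecting terms and using $\E_l|x_l|\ge|\E_l x_l|$ and $\E[X^2]\ge\E[X]^2$, one gets $\E_l[\T^2]\le1-(\ldots)^2$. Simplifying this as in Theorem \ref{thm2:case with C inverse} gives $\frac{4\sqrt2}{\sqrt{K-1}}+8K\sqrt{\eta_1+\eta_2+\eta_3}$, with the constants absorbed as in that proof.
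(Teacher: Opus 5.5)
Your proposal is correct and follows the route the paper indicates. The paper's proof of this theorem is only a pointer to the argument of Theorem \ref{thm2:case with C inverse} combined with Lemma \ref{lem:case with C, P and more}, and you correctly see that after averaging over $l$ the main coefficient becomes the single number $\E_m[C_{mm}P_{mm}]$, whose modulus the lemma controls. Your norm bound for the third term also works: Cauchy--Schwarz over $g\in G_k$ gives $\E_l\|\psi_{5,l,3}\|\le\sqrt{2K\eta_1}$. Combining this with $2K\sqrt{\eta_2}$ for the second term and $2\sqrt{2/(K-1)}$ for the $Err$ pieces, the final simplification yields the stated constants for $K\ge 2$.
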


\section{Conclusions}

In this work, we identified a strengthened average-case correctness condition for the QFT that is sufficient for running the HHL algorithm in the worst case. Using a strengthened Linden-de Wolf’s protocol, we can also run the HHL algorithm for solving linear systems. As mentioned in \cite{linden2021average}, it would be interesting to explore more applications of this idea. 
Our results suggest that similar coherence-sensitive average-case conditions may be useful for analyzing other QFT-based quantum algorithms.
For example, a similar idea to Linden-de Wolf has been used in \cite{van2023quantum} for the task of quantum tomography.

\section*{Acknowledgements}

The research was supported by the National Key Research Project of China under Grant No. 2025YFA1017200.  The author would like to thank Noah Linden for helpful discussions on an earlier version of this paper.

\bibliographystyle{plain}
\bibliography{main}

\end{document}